\DeclareMathOperator{\lcm}{lcm}
\DeclareMathOperator{\diag}{diag}
\begin{document}
	\title{ Monomial codes under linear algebra point of view}
	%
	%
	\author{El Mahdi Mouloua \and
		Mustapha Najmeddine\and
		Maria Isabel Garcia-Planas
		\and Hassan Ouazzou }
	\authorrunning{E. Mouloua et al.}

    \institute{}

	\maketitle              
	\begin{abstract}
	
			The monomial codes over a Galois field $\mathbb{F}_q$ that can be thought invariant subspaces are essential to us in this study. More specifically, we look into the link between monomial codes and characteristic subspaces and the decomposition of monomial codes into minimal invariant subspaces. Additionally, we study some of the characteristics of monomial codes and generalize them by proposing the idea of generalized monomial codes.

		\keywords{ Linear algebra, Monomial codes, Hyperinvariant subspace, Characteristic subspaces.}
	\end{abstract}

	
	%
	%
	%
\section{Introduction}\label{sec1}
Let $\mathbb{F}_q$ be the finite field with $q$ elements, where $q$ is a prime power, and $n$ be a positive integer such that $\mbox{\rm gcd}\,(q,n)=1$. A code $C$ of length $n$ and dimension $k$ is called linear if $C$ is a linear subspace of $\mathbb{F}_q^n$. The code $C$ is referred to as a $[n,k,d]$ linear code if $d$ is its minimum Hamming distance.

An $[n,k]$ linear code $C$ is called cyclic if whenver $c=(c_{_{0}}, c_{_{1}}, \ldots , c_{_{n-1}})$ belongs to $C$, then :$sc = (c_{_{n-1}}, c_{_{0}}, \ldots , c_{_{n-2}})$ is also in $C$. A cyclic code $C$ of length $n$ can be seen as an ideal of the ring $\frac{\mathbb{F}_q[x]}{<x^{n}-1>}$.
Cyclic codes are simple to implement and rich in algebra. For more details, readers can see \cite[Chapter 4 and 5]{Huffuman2003}.

E.R Berlekamp presented the first generalization of this concept in \cite{Berlekamp} defining constacyclic codes. The work of \cite{Maria2017} introduced monomial codes as generalizations of cyclic and constacyclic codes.

A code $C$ of length $n$ over the field $\mathbb{F}_q$ is called {\it monomial} with respect to the vector $a := (a_{_0},\ldots , a_{_{n-1}}) \in  \mathbb{F}_{q}^n,$if whenever $c = (c_{_{0}}, c_{_{1}}, \ldots , c_{_{n-1}})$ belongs to $C$, then we have $(a_{_{n-1}}c_{_{n-1}}, a_{_{0}}c_{_{0}}, \ldots , a_{_{n-2}}c_{_{n-2}})$ is also in $C$.

Cyclic codes are monomial codes induced by the vector $a=(1,\ldots,1)$. If $a=(1,\ldots,1,\lambda)$ we refer to $\lambda$-constacyclic codes.

As mentioned in \cite{Maria2017}, monomial codes are used widely because shift registers can encode them.

The remainder of this paper is organized as follows: Section 2 provides a brief overview of monomial codes. We define monomial codes as the direct sum of minimal invariant subspaces. Some properties of monomial codes are derived. The goal of Section 3 is to determine the relationship between hyperinvariant and characteristic subspaces in specific situations. The relationship between monomial codes and characteristic subspaces is described. Section 4 presents a generalization of monomial codes with some properties. Finally, section 5 brings the paper to a close.

\section{Monomial codes as invariant subspaces}\label{sec2}
Let $\mathbb{F}_{q}$ be the finite field of $q$ elements where $q$ is a  power prime, let  $ \mathbb{F}_{q}^n$ be the vector subspace of $n$-tuples, assume that $\mbox{\rm gcd}\,(q,n) = 1$ because $x^n-1$ has no repeated irreducible factors over $\mathbb{F}_q$.

Let us start by recalling the definition of monomial matrices and some of their properties.
\begin{definition}[\cite{Garcia2015}]
	
	\begin{enumerate}
		\item  An $n\times n $ matrix  $ A= \left( a_{ij} \right)_{1\leq i,j\leq n}  $   is called to be a \textbf{ monomial }  if it is a  nonsingular matrix and has in each row and each column exactly one non-zero component.
		\item An $n\times n $ matrix  $ P= \left( P_{ij} \right)_{1\leq i,j\leq n}  $ called to be  \textbf{permutation  matrix}  if there is a permutation $\sigma \in S_{n}$ such that $P$ is  	
		obtained  by permuting the columns  of the identity matrix $I_n$ (i.e)
		for each  $P_{ij} =\begin{cases}
		1 \text{ if }\  i = \sigma(j)\\
		0 \text{ if }\  i \neq  \sigma(j)
		
	\end{cases},$
	and we write $P=P_{\sigma}$.
	
\end{enumerate}
\end{definition}

\begin{remark}
Each permutation matrix $P_{\sigma}$ is a monomial matrix with all non-zero components equal to $1$.
\end{remark}
According to \cite{Garcia2015} we  have the following lemma:
\begin{lemma}[\cite{Garcia2015}, Lemma 2]
If  $A$ is a   monomial matrix of order $n$ with non-zero components $a_0, a_1,\ldots, a_{n-1}$ elements of $ \mathbb{F}_{_q},$   then there is  a permutation  $\sigma\in S_n$ such that
$A= \diag(a_{_0}, a_{_1},\ldots, a_{_{n-1}}) P_{\sigma} $
\end{lemma}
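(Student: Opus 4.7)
The plan is to extract the permutation $\sigma$ directly from the non-zero pattern of $A$ and then check the factorization entry-by-entry. First I would use the defining property of a monomial matrix: since $A$ has exactly one non-zero entry in each row and each column, for every row index $i \in \{0,\ldots,n-1\}$ there is a unique column index $\tau(i)$ with $A_{i,\tau(i)} \neq 0$. The column condition forces $\tau$ to be injective, hence a bijection of $\{0,\ldots,n-1\}$; setting $\sigma := \tau^{-1}$ produces a permutation in $S_n$. (Nonsingularity is automatic once the entries are non-zero, so it does not need to be invoked separately here.)

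Next I would label the non-zero entries row-wise by $a_i := A_{i,\tau(i)}$, so that $a_0,\ldots,a_{n-1}$ is precisely the list of non-zero components of $A$. By the definition given in the excerpt, the permutation matrix $P_\sigma$ has entries $(P_\sigma)_{ij} = 1$ exactly when $i = \sigma(j)$, equivalently when $j = \tau(i)$, and $0$ otherwise. A direct computation then gives
\[
\bigl(\diag(a_0,\ldots,a_{n-1}) P_\sigma\bigr)_{ij} = a_i (P_\sigma)_{ij} = \begin{cases} a_i & \text{if } j = \tau(i), \\ 0 & \text{otherwise,} \end{cases}
\]
which agrees with $A_{ij}$ on every entry, yielding $A = \diag(a_0,\ldots,a_{n-1}) P_\sigma$.

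There is no genuine obstacle: the argument is essentially bookkeeping once the correct permutation has been read off from the support of $A$. The only subtle point is keeping the indexing convention straight, namely that indexing the diagonal by rows (rather than by columns) is what produces the factorization in the order $D P_\sigma$ stated in the lemma; had I labelled the components by the column in which they occur, the factorization would instead take the form $P_\sigma D'$ for a different diagonal matrix $D'$, which is why the row-wise labelling must be chosen at the outset.
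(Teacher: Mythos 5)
Your proof is correct. Note that the paper itself does not prove this lemma at all: it is recalled from the cited reference (Garcia--Planas and Magret, Lemma~2) and used as a known fact, so there is no internal proof to compare against. Your argument --- reading $\tau$ off the support of $A$, observing that the column condition makes $\tau$ a bijection, setting $\sigma=\tau^{-1}$, labelling $a_i$ row-wise, and checking $A=\diag(a_0,\ldots,a_{n-1})P_\sigma$ entry by entry --- is the standard one and is complete. Your closing remark on the indexing convention is worth keeping: the statement only holds with the components listed by row (as your factorization makes explicit), and indeed in the paper's own ``simple monomial matrix'' \eqref{Monomial_matrix} the components are labelled by column, so the diagonal factor there would read $\diag(a_{n-1},a_0,\ldots,a_{n-2})$ rather than $\diag(a_0,\ldots,a_{n-1})$; the lemma is to be understood up to such a relabelling. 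Your aside that nonsingularity is automatic from the support pattern (the determinant being $\pm\prod_i a_i\neq 0$) is also accurate, so no step is missing.
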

Taking into account that the set of monomial matrices has a multiplicative group structure.

We are interested in monomial matrices of the form:
\begin{equation}\label{Monomial_matrix}
A = \left( \begin{array}{ccccc}
0 & 0 & \ldots &0 & a_{_{n-1}} \\
a_{_0} & 0 &\ldots  &0 & 0 \\
0 & a_{_1} & \ddots &\vdots &\vdots \\
\vdots& \ddots &\ddots  &0 &\vdots\\
0& \ldots &0 & a_{_{n-2}} & 0
\end{array}\right)
\end{equation}
We will call it a {\it simple monomial matrix}. Below we summarize some of the important properties of this kind of matrix.

\begin{proposition}[\cite{Maria2017} properties]
Let $A$ be a simple monomial matrix, with non-zero components $a_0, a_1,\ldots, a_{n-1} \in \mathbb{F}_{_q}$. Then,

\begin{itemize}
	\item[-] $A^{n} = a_{_0}...a_{_{n-1}}I_n$
	\item[-] $A^{-1} = \frac{1}{\prod_{i=0}^{_{n-1}}a_i } A^{_{n-1}}.$ ($\prod_{i=0}^{_{n-1}}a_i \neq 0$ because  $A$ is nonsingular)
	\item[-] Suppose that $a = \prod_{i=0}^{n-1}a_i  $. Then $A$ is similar to $A_{a}$, where :
	\begin{equation*}
		A_{a} =
		\begin{pmatrix}
			0& 0 & \cdots & 0 &  a\\
			1 & 0 & \cdots & 0  & 0\\
			0 & 1 & \cdots & 0  & 0\\
			\vdots  & \vdots  & \ddots & \vdots&0  \\
			0 & 0 & \cdots & 1&0
		\end{pmatrix}
	\end{equation*}
	and
	\begin{equation*}
		S =
		\begin{pmatrix}
			0& 0 & \cdots & 0 &  \prod_{i=0}^{n-1}a_i\\
			a_1 & 0 & \cdots & 0  & 0\\
			0 & a_1a_2 & \cdots & 0  & 0\\
			\vdots  & \vdots  & \ddots & \vdots&0  \\
			0 & 0 & \cdots & \prod_{i=0}^{n-2}a_i&0
		\end{pmatrix}
	\end{equation*}
\end{itemize}

\end{proposition}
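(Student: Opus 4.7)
The plan is to reduce every assertion to a direct computation of how $A$ acts on the standard basis $e_1,\ldots,e_n$. Reading off the columns of the matrix in \eqref{Monomial_matrix}, one finds $A e_j = a_{j-1} e_{j+1}$ for $j=1,\ldots,n-1$ and $A e_n = a_{n-1} e_1$; in other words, $A$ is a weighted cyclic shift on the basis.

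For the first bullet, I would iterate this recursion. Applying $A$ to $e_j$ exactly $n$ times returns $e_j$ to itself and accumulates every weight $a_0, a_1,\ldots, a_{n-1}$ once (in cyclic order starting from $a_{j-1}$), so $A^{n} e_j = \bigl(\prod_{i=0}^{n-1} a_i\bigr) e_j$ for each basis vector, and therefore $A^{n} = \bigl(\prod_{i=0}^{n-1} a_i\bigr) I_n$. The second bullet then follows immediately by multiplying $A\cdot A^{n-1} = \bigl(\prod a_i\bigr) I_n$ by $1/\prod a_i$, which is permitted because $A$ is nonsingular and hence no $a_i$ vanishes.

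For the similarity with $A_a$, the strategy is to exhibit a basis $\{s_1,\ldots,s_n\}$ on which $A$ acts exactly the way $A_a$ acts on the standard basis, that is, $A s_j = s_{j+1}$ for $j=1,\ldots,n-1$ and $A s_n = a\, s_1$. Setting $s_1 := a_0 e_2$ and recursively $s_{j+1} := A s_j$, the weighted-shift formula above forces
\[
s_j \;=\; \Bigl(\prod_{i=0}^{j-1} a_i\Bigr) e_{j+1} \quad (1 \leq j \leq n-1), \qquad s_n \;=\; a\, e_1.
\]
The closing condition $A s_n = a\, s_1$ reduces to a single instance of the first bullet. Collecting the $s_j$ as columns produces the simple monomial matrix $S$ displayed in the statement, and the relation $AS = S A_a$ is the desired similarity $A = S A_a S^{-1}$.

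Nothing in this argument is a real obstacle; the only point requiring care is bookkeeping for the cyclic wrap-around at $j=n$, where the weight $a_{n-1}$ couples the last coordinate back to the first and is precisely what creates the scalar $a=\prod_{i=0}^{n-1} a_i$ in the top-right corner of $A_a$. Once the basis $\{s_j\}$ is written down explicitly, the three assertions can be read off by inspection.
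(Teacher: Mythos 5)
Your proof is correct, and it is the natural argument: the paper itself states this proposition without proof (it is imported from the cited reference), so the direct computation on the standard basis --- $Ae_j=a_{j-1}e_{j+1}$ for $j<n$, $Ae_n=a_{n-1}e_1$, iterate to get $A^n=\bigl(\prod_{i=0}^{n-1}a_i\bigr)I_n$, divide to get $A^{-1}$, and then build the basis $s_{j+1}=As_j$ to read off $S^{-1}AS=A_a$ --- is exactly what is needed, and every step checks out. One small caveat: your $S$, with columns $s_j=\bigl(\prod_{i=0}^{j-1}a_i\bigr)e_{j+1}$ and $s_n=a\,e_1$, is not literally the matrix displayed in the statement, whose first subdiagonal entries read $a_1$ and $a_1a_2$ while its last entries read $\prod_{i=0}^{n-2}a_i$ and $\prod_{i=0}^{n-1}a_i$; that displayed matrix is internally inconsistent (it mixes products starting at $a_1$ with products starting at $a_0$) and, taken at face value, does not satisfy $AS=SA_a$ for generic $a_i$. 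Your version is the consistent one (it agrees with the last two printed columns), so the discrepancy is a typo in the statement rather than a gap in your argument; it would be worth saying explicitly that $S$ is invertible because each column is a nonzero multiple of a distinct standard basis vector, which is the one hypothesis needed to convert $AS=SA_a$ into the similarity $A=SA_aS^{-1}$.
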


In \cite{Maria2017}, the authors define the structure of monomial codes:
\begin{definition}[\cite{Maria2017}, Definition 3.1 page 1103]\label{D4.1}	
A linear code  $C \subseteq {F_q}^{^n}$  is called
\textbf{monomial} code with associated vector  $a=\left(a_{0},a_{1},\ldots, a_{n-1}\right)\in {F_q}^{*^{n}} $
if for each codeword $c=(c_0,c_1,\ldots, c_{n-1})\in C,$ we have
$c^{'}= \left(a_{n-1}c_{n-1},a_{0}c_{0},\ldots,a_{n-2}c_{n-2} \right)$ is also a codeword.\\
The shift (the map $c 	\longrightarrow sc)$ can be represented in a matrix form: $$Ac^t = (a_{_{n-1}}c_{_{n-1}},a_{_0}c_{_0},\ldots,a_{_{n-2}}c_{_{n-2}}).$$  $A$ is given below:
\begin{equation}
A = \left( \begin{array}{ccccc}
0 & 0 & \ldots &0 & a_{_{n-1}} \\
a_{_0} & 0 &\ldots  &0 & 0 \\
0 & a_{_1} & \ddots &\vdots &\vdots \\
\vdots& \ddots &\ddots  &0 &\vdots\\
0& \ldots &0 & a_{_{n-2}} & 0
\end{array}\right)
\end{equation}
\end{definition}
Let:

\begin{equation}\label{}
\begin{array}{ccccc}
\Phi_{a}:& \mathbb{F}_{q}^n   &\longrightarrow &    \mathbb{F}_{q}^n\\\\
& x=(x_0,x_2,\ldots,x_{n-1})   & \longmapsto &
(a_{n-1}x_{n-1},a_{1}x_1,..,a_{n-2}x_{n-2})
\end{array}
\end{equation}
whose associated matrix  in the canonical basis of $\mathbb{F}_{q}^n$, is exactly the  matrix $A$ above. \\

It is clear that $\Phi_{a}$ is an homomorphism of $\mathbb{F}_q^{n}$.

As a direct result, we have the following proposition:

\begin{proposition}
A linear code $C$ with length $n$ over the field $\mathbb{F}_{q}$ is monomial if, and only if, $C$ is an $A$-invariant subspace of $\mathbb{F}^n_{q}$.
\end{proposition}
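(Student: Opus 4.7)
The plan is to unpack the two definitions and observe that they coincide. The statement is essentially a translation from the ``codeword by codeword'' language of Definition~\ref{D4.1} into the ``subspace closed under a linear map'' language of invariant subspaces. So my proof would be a short double implication, with the matrix $A$ of Definition~\ref{D4.1} (equivalently, the homomorphism $\Phi_a$) playing the role of bridge between the two formulations.

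First I would recall what $A$-invariance means: a linear subspace $C \subseteq \mathbb{F}_q^n$ is $A$-invariant if $A c^t \in C$ for every $c \in C$ (identifying row and column vectors as usual). Next I would compute $A c^t$ for a generic codeword $c = (c_0, c_1, \ldots, c_{n-1})$ using the explicit form of $A$ displayed in Definition~\ref{D4.1}: a direct matrix-vector multiplication yields
\begin{equation*}
A c^t = (a_{n-1} c_{n-1},\, a_0 c_0,\, a_1 c_1,\, \ldots,\, a_{n-2} c_{n-2})^t,
\end{equation*}
which is precisely the vector $c'$ appearing in the definition of a monomial code. This identification is really the whole content; both directions then follow at once.

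For the forward direction, I would assume $C$ is monomial with associated vector $a$. Take any $c \in C$; by hypothesis $c' \in C$, and by the computation above $c' = A c^t$, so $A c^t \in C$. Since $C$ is already a linear subspace by assumption, this shows $C$ is $A$-invariant. For the converse, assume $C$ is a linear subspace of $\mathbb{F}_q^n$ that is $A$-invariant. For any codeword $c \in C$, the vector $A c^t$ belongs to $C$ by invariance; but $A c^t$ is exactly the shifted word $c'$, so $C$ satisfies the closure condition of Definition~\ref{D4.1} and is therefore monomial with associated vector $a$.

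There is no real obstacle here: the only thing to be careful about is keeping track of the indexing convention (the cyclic shift $i \mapsto i+1 \pmod n$ together with the multiplication by $a_{i}$) so that the components of $A c^t$ line up exactly with the components of $c'$ as written in the definition. Once the matrix multiplication is carried out correctly, the equivalence is immediate.
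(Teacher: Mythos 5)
Your proposal is correct and follows exactly the reasoning the paper relies on: the paper states this proposition as a ``direct result'' of the observation that the shift $c \mapsto c'$ is represented by the matrix $A$ (equivalently $\Phi_a$), and your explicit computation of $Ac^t$ together with the two immediate implications is precisely that argument written out in full.
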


Let $f_{_A}(x)$ be the characteristic polynomial of the matrix $A$, then  we have:
$$f_A(x)=\begin{vmatrix}
-x & 0 & \ldots &0 & a_{n-1} \\
a_0 & -x &\ldots  &0 & 0 \\
0 & a_1 & \ddots &\vdots &\vdots \\
\vdots& \ddots &\ddots  &-x &\vdots\\
0& \ldots &0 & a_{n-2} & -x
\end{vmatrix} = (-1)^n\left (x^n-\prod_{i=0}^{n-1}a_i\right )$$

In the remainder $f(x)$ means $f_{_A}(x)$.

The eigenvectors associated with the eigenvalues of the matrix $A$ allow us to describe invariant subspaces. Below is a proposition that gives the structure of these eigenvalues:

\begin{proposition}
For each eigenvalue $\lambda$ (in the case where there exists
any) the vector $$v(\lambda)=(\lambda^{n-1}, a_{0}\lambda^{n-2}, a_{0}a_{1}\lambda^{n-3},\ldots , a_{0}a_{1}\ldots a_{n-2})$$ is an associated eigenvector.
\end{proposition}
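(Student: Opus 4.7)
The plan is to verify directly that $Av(\lambda)=\lambda v(\lambda)$, using only the characteristic equation $\lambda^{n}=\prod_{i=0}^{n-1}a_{i}$ that every eigenvalue of $A$ must satisfy (this is exactly what $f_A(\lambda)=0$ says, after the calculation of $f_A(x)$ shown above).

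First, I would interpret the stated row of coefficients as the column vector $v(\lambda)=(v_0,v_1,\ldots,v_{n-1})^{t}$ whose entries have the uniform formula $v_{k}=\bigl(\prod_{i=0}^{k-1}a_i\bigr)\,\lambda^{n-1-k}$ for $0\le k\le n-1$, with the usual convention that the empty product equals $1$ (so $v_0=\lambda^{n-1}$). Writing the entries in this closed form is the key preparatory step, because it makes the inductive pattern explicit and turns the verification into a line-by-line matching of exponents of $\lambda$ and partial products of the $a_i$'s.

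Next I would compute $Av(\lambda)$ row by row, exploiting the sparsity of the simple monomial matrix $A$. For $1\le k\le n-1$, the only nonzero entry of row $k$ of $A$ is $a_{k-1}$ in column $k-1$, so
\begin{equation*}
(Av(\lambda))_{k}=a_{k-1}v_{k-1}=a_{k-1}\Bigl(\prod_{i=0}^{k-2}a_i\Bigr)\lambda^{n-k}=\Bigl(\prod_{i=0}^{k-1}a_i\Bigr)\lambda^{n-k},
\end{equation*}
which is exactly $\lambda v_{k}$. For the top row, the only nonzero entry is $a_{n-1}$ in the last column, giving $(Av(\lambda))_{0}=a_{n-1}v_{n-1}=\prod_{i=0}^{n-1}a_i$. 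This is where the characteristic equation is used: since $\lambda$ is an eigenvalue, $\lambda^{n}=\prod_{i=0}^{n-1}a_i$, hence $(Av(\lambda))_{0}=\lambda^{n}=\lambda v_{0}$. Combining the two cases yields $Av(\lambda)=\lambda v(\lambda)$.

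Finally, I would briefly note that $v(\lambda)$ is a genuine eigenvector, i.e.\ nonzero. If $\lambda\ne 0$, the first coordinate $\lambda^{n-1}$ is already nonzero, while $\lambda=0$ is impossible because $A$ is nonsingular (equivalently $\prod_i a_i\ne 0$, so $0$ is not a root of $f_A$). There is essentially no obstacle here; the only thing one has to be careful about is the index bookkeeping in writing $v_k$ in the uniform form above so that the top-row computation and the generic-row computation fit into the same pattern.
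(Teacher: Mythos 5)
Your proof is correct and follows essentially the same route as the paper: a direct row-by-row verification that $Av(\lambda)=\lambda v(\lambda)$, using the relation $\lambda^{n}=\prod_{i=0}^{n-1}a_i$ for the top row and the nonsingularity of $A$ to conclude $v(\lambda)\neq 0$. Your version merely makes the index bookkeeping more explicit than the paper's displayed computation.
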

\begin{proof}
Taking into account that the matrix $A$ is nonsingular $a_{0}\cdot a_{1}\cdot \ldots \cdot a_{n-1}\cdot \lambda\neq 0$, so $v(\lambda)\neq 0$.

$$\begin{pmatrix}
0 & 0 & \ldots &0 & a_{n-1} \\
a_0 & 0 &\ldots  &0 & 0 \\
0 & a_1 & \ddots &\vdots &\vdots \\
\vdots& \ddots &\ddots  &0 &\vdots\\
0& \ldots &0 & a_{n-2} & 0
\end{pmatrix}\begin{pmatrix}
\lambda^{n-1}\\ a_{0} \lambda^{n-2}\\a_{0}a_{1}\lambda^{n-3}\\ \vdots\\ a_{0}\ldots a_{n-2}
\end{pmatrix} =\begin{pmatrix} a_{n-1}a_{0}\ldots a_{n-2}\\
a_{0} \lambda^{n-1}\\ a_{0}a_{1} \lambda^{n-2}\\a_{0}a_{1}a_{2}\lambda^{n-3}\\ \vdots\\ a_{0}\ldots a_{n-2}\lambda
\end{pmatrix}=\lambda \begin{pmatrix}
\lambda^{n-1}\\ a_{0} \lambda^{n-2}\\a_{0}a_{1}\lambda^{n-3}\\ \vdots\\ a_{0}\ldots a_{n-2}
\end{pmatrix}
$$

(Note that $\prod_{i=0}^{n-1} a_{i}=\lambda^{n}$)
\end{proof}

Let  $f(x) = (-1)^{n}f_1(x)\ldots f_r(x)$ be the factorization of $f$
into irreducible factors over $\mathbb{F}_q$. By the theorem of Cayley Hamilton, we have $f(A) = 0$.
Assume that $(n,q)=1$, thus $f(x)$ has distinct factors $ f_i(x)$, $i=1,\ldots,r$ and consider the homogeneous set of equations:
\begin{equation}\label{space-equation}
f_i(A)x=0, \quad  x \in \mathbb{F}_q^n
\end{equation}

\noindent for $i = 1, \ldots, r$. If $W_i$ stands for the solution space of \eqref{space-equation}, then we may write $W_i = \mbox{\rm Ker}\,  f_{i}(\Phi_a)$.
To investigate the decomposition of monomial codes as minimal invariant subspaces, we need the following proposition (to see \cite{Radkova2009}).
\begin{proposition}\label{prop}
Let $\Phi$ be an homomorphism of $V$ and let $U$ be a $\Phi$-invariant subspace of $V$ and $\dim_{F}V = n$. Then $f_{\Phi \vert U} (x)$ divides
$f_{\Phi(x)}$. In particular, if $V = U \bigoplus W$ and $W$ is a $\Phi$-invariant subspace of $\mathbb{F_q}^n$ then $f_{\Phi(x)} = f_{\Phi \vert U}f_{\Phi \vert W}$. And $Z = Z_1 \bigoplus \ldots \bigoplus Z_r$.
\end{proposition}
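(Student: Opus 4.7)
The plan is to prove both parts by working with matrix representations of $\Phi$ in carefully chosen bases, which reduces the characteristic polynomial identities to the block-triangular/block-diagonal determinant formula. The last assertion (about $Z = Z_1 \oplus \dots \oplus Z_r$) I read as the primary decomposition statement $V = W_1 \oplus \dots \oplus W_r$ for the subspaces $W_i = \Ker f_i(\Phi_a)$ introduced just before; I would handle it separately using the coprimality of the factors $f_i$.

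For the divisibility part, I would start from a basis $\{u_1,\dots,u_k\}$ of $U$ and extend it to a basis $\mathcal{B} = \{u_1,\dots,u_k,v_{k+1},\dots,v_n\}$ of $V$. Since $U$ is $\Phi$-invariant, each $\Phi(u_j)$ lies in $U$, so the matrix of $\Phi$ in $\mathcal{B}$ has the block upper-triangular shape
\begin{equation*}
M = \begin{pmatrix} M_{\Phi\vert U} & B \\ 0 & C \end{pmatrix}.
\end{equation*}
Taking $\det(xI - M)$ and using the standard block-triangular determinant formula gives $f_\Phi(x) = f_{\Phi\vert U}(x)\cdot \det(xI - C)$, which proves that $f_{\Phi\vert U}(x)$ divides $f_\Phi(x)$.

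For the direct-sum case $V = U \oplus W$ with both summands $\Phi$-invariant, I would take the union of a basis of $U$ and a basis of $W$ as a basis of $V$. Invariance of both subspaces forces the off-diagonal blocks to vanish, so the matrix of $\Phi$ becomes block diagonal with diagonal blocks $M_{\Phi\vert U}$ and $M_{\Phi\vert W}$. The determinant of a block-diagonal matrix is the product of the block determinants, which yields exactly $f_\Phi(x) = f_{\Phi\vert U}(x)\, f_{\Phi\vert W}(x)$.

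For the final claim, interpreted as the decomposition $V = W_1 \oplus \dots \oplus W_r$ into the kernels $W_i = \Ker f_i(\Phi_a)$, I would use that the hypothesis $\gcd(n,q)=1$ forces the irreducible factors $f_1,\dots,f_r$ of $f$ to be pairwise distinct, hence pairwise coprime in $\mathbb{F}_q[x]$. Set $g_i(x) := \prod_{j\neq i} f_j(x)$; then $\gcd(g_1,\dots,g_r)=1$, so Bezout provides polynomials $h_i$ with $\sum_i h_i g_i = 1$. Evaluating at $\Phi$ gives $\sum_i h_i(\Phi)g_i(\Phi) = \id$, and since $f_i(\Phi)g_i(\Phi) = f(\Phi) = 0$ by Cayley--Hamilton, the image of $h_i(\Phi)g_i(\Phi)$ lies in $W_i$. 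This simultaneously yields the spanning property and the triviality of the intersections (via standard coprimality arguments), proving the direct-sum decomposition; each $W_i$ is $\Phi$-invariant because $\Phi$ commutes with every polynomial in $\Phi$. The main obstacle is this last part: the first two assertions are immediate once the right basis is chosen, whereas the direct-sum decomposition requires the primary decomposition machinery and the crucial use of $\gcd(n,q)=1$ to guarantee that $f$ is squarefree.
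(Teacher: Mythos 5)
Your proof is essentially correct, but note that the paper never proves this proposition itself: it is quoted from Radkova--Van Zanten, so there is no in-paper argument to match for the first two claims. Your block-matrix treatment of those claims is the standard one and is sound: extending a basis of $U$ to $V$ gives a block upper-triangular matrix for $\Phi$, hence $f_{\Phi}(x)=f_{\Phi\vert U}(x)\det(xI-C)$, and for $V=U\oplus W$ the off-diagonal blocks vanish and the determinant factors. For the dangling final clause ``$Z=Z_1\oplus\ldots\oplus Z_r$'', your reading as the primary decomposition $V=W_1\oplus\ldots\oplus W_r$ with $W_i=\Ker f_i(\Phi_a)$ is a reasonable repair, and your Bezout argument (with $g_i=\prod_{j\neq i}f_j$, $\sum_i h_ig_i=1$, squarefreeness of $x^n-\prod a_i$ from $\gcd(n,q)=1$) is exactly the machinery the paper itself deploys later, in the proof of assertion 2 of Theorem~\ref{theo 1}. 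Be aware, though, that in the paper's context $Z$ denotes an arbitrary $\Phi_a$-invariant subspace and $Z_i=Z\cap W_i$, so the intended claim is slightly more general than your $Z=V$ special case; the fix is cheap: apply your identity $z=\sum_i h_i(\Phi)g_i(\Phi)z$ to $z\in Z$ and observe that each summand lies in $Z$ (since $Z$ is invariant under every polynomial in $\Phi$) as well as in $W_i$, giving $Z=\sum_i(Z\cap W_i)$, while the directness of the sum follows from the coprimality argument you already sketched (if $z\in Z_i\cap\sum_{j\neq i}Z_j$ then $f_i(\Phi)z=0$ and $g_i(\Phi)z=0$, so $z=0$ by Bezout). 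With that one-line strengthening spelled out, your proposal covers the full statement and coincides in substance with the paper's own use of the result.
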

Below we summarize the general properties of minimal invariant subspaces; for the proof of Theorem 1, we adopt the same approach proposed in \cite{Radkova2007}:
\begin{theorem}\label{theo 1}
\begin{enumerate}
	\item For each $i=1,\ldots , r$, $W_i$ is a $\Phi_{a}$-invariant  subspace of $\mathbb{F}_q^n$.
	\item If $Z$ is a $\Phi_{a}$-invariant subspace of $\mathbb{F}_q^n$, and $Z_i = Z \cap W_i$, for $i = 1 \ldots r$ then, $Z_i$ is   $\Phi_a$-invariant subspace of $\mathbb{F}_q^n$ and $Z = Z_1 \bigoplus \ldots  \bigoplus Z_r$.
	\item $\mathbb{F}_q^n = W_1 \bigoplus \ldots \bigoplus W_t$.
	\item  $\dim_{\mathbb{F}_q}W_i = \mbox{\rm deg}\, f_i(x) = k_i$.
	\item $f_{\Phi_a \vert W_i}(x) = (-1)^{k_i}f_i(x)$.
	\item $W_i$ is a minimal $\Phi_a$-invariant subspace of $\mathbb{F}_q^n$.
\end{enumerate}
\end{theorem}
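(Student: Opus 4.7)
The plan is to exploit the factorization $f(x)=(-1)^{n}f_{1}(x)\cdots f_{r}(x)$ into pairwise coprime (since $(n,q)=1$, so $f$ is separable) monic irreducibles, together with Cayley--Hamilton $f(\Phi_{a})=0$, and to use the earlier Proposition relating invariant subspaces to characteristic polynomial divisors. Part~(1) is immediate: every polynomial in $\Phi_{a}$ commutes with $\Phi_{a}$, so if $x\in W_{i}=\Ker f_{i}(\Phi_{a})$ then $f_{i}(\Phi_{a})\Phi_{a}(x)=\Phi_{a}f_{i}(\Phi_{a})(x)=0$, giving $\Phi_{a}(x)\in W_{i}$.

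For part~(2), I would invoke Bézout. Because the $f_{i}$ are pairwise coprime, for every $i$ the polynomials $f_{i}$ and $\widehat{f_{i}}:=\prod_{j\neq i}f_{j}$ are coprime, and the family $\{\widehat{f_{1}},\ldots,\widehat{f_{r}}\}$ is globally coprime, so one finds $g_{1},\ldots,g_{r}\in\mathbb{F}_{q}[x]$ with $\sum_{i}g_{i}(x)\widehat{f_{i}}(x)=1$. For any $z\in Z$ write $z=\sum_{i}z_{i}$ with $z_{i}:=g_{i}(\Phi_{a})\widehat{f_{i}}(\Phi_{a})(z)$. Then $z_{i}\in Z$ since $Z$ is $\Phi_{a}$-invariant, and $f_{i}(\Phi_{a})(z_{i})$ is a polynomial in $\Phi_{a}$ applied to $z$ whose polynomial equals $g_{i}\cdot f_{i}\widehat{f_{i}}=g_{i}\cdot\prod_{k}f_{k}$, which is a scalar multiple of $f$, hence annihilates $z$ by Cayley--Hamilton. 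Thus $z_{i}\in Z\cap W_{i}=Z_{i}$, giving $Z=\sum Z_{i}$. To see the sum is direct, assume $\sum_{i}z_{i}=0$ with $z_{i}\in Z_{i}$; applying $\widehat{f_{j}}(\Phi_{a})g_{j}(\Phi_{a})$ shows $z_{j}=0$ because $\widehat{f_{j}}(\Phi_{a})$ vanishes on every $W_{i}$ with $i\neq j$ while $g_{j}(\Phi_{a})\widehat{f_{j}}(\Phi_{a})$ restricts to the identity on $W_{j}$ (since on $W_{j}$ one has $z=\sum_{i}g_{i}(\Phi_{a})\widehat{f_{i}}(\Phi_{a})z$ and only the $i=j$ term survives). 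This gives (2), and specializing to $Z=\mathbb{F}_{q}^{n}$ yields (3) (reading the stray $W_{t}$ as $W_{r}$).

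For (4) and (5), apply the earlier Proposition to the decomposition $\mathbb{F}_{q}^{n}=W_{1}\oplus\cdots\oplus W_{r}$: it gives $f(x)=\prod_{i=1}^{r}f_{\Phi_{a}|W_{i}}(x)$. On the other hand, by definition of $W_{i}$ the polynomial $f_{i}$ annihilates $\Phi_{a}|W_{i}$, so the minimal polynomial of $\Phi_{a}|W_{i}$ divides the irreducible $f_{i}$ and hence equals $f_{i}$ (it cannot be $1$, since then $W_{i}=0$ would contradict the product formula, as the $f_{j}$'s are nonconstant). Consequently $f_{\Phi_{a}|W_{i}}$ is a power of $f_{i}$, say $\pm f_{i}^{m_{i}}$; plugging into $f=(-1)^{n}\prod f_{i}=\prod f_{\Phi_{a}|W_{i}}$ and using unique factorization and that the $f_{i}$ are distinct forces $m_{i}=1$. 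This gives $f_{\Phi_{a}|W_{i}}=(-1)^{k_{i}}f_{i}$ (matching the sign convention of the earlier characteristic polynomial computation) and $\dim W_{i}=\deg f_{i}=k_{i}$.

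Finally for (6), let $V\subseteq W_{i}$ be a nonzero $\Phi_{a}$-invariant subspace. By the earlier Proposition $f_{\Phi_{a}|V}$ divides $f_{\Phi_{a}|W_{i}}=(-1)^{k_{i}}f_{i}$, and since $f_{i}$ is irreducible the only monic divisors are $1$ and $f_{i}$; the former would force $V=0$, so $f_{\Phi_{a}|V}=(-1)^{\dim V}f_{i}$ and hence $\dim V=k_{i}=\dim W_{i}$, giving $V=W_{i}$. The step I expect to require most care is part~(2): the Bézout identity machinery and verifying that the summands land in the correct $Z_{i}$ (not just in $W_{i}$) is where the argument is easy to garble; everything else then follows cleanly from (2) together with the previously stated Proposition on invariant subspaces and characteristic polynomials.
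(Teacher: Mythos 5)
Your proposal is correct. For parts (1), (2), (3) and (6) you follow essentially the paper's own route: part (1) is the same commutation argument, part (2) uses the same Bézout identity $\sum_i g_i\widehat{f_i}=1$ with $\widehat{f_i}=\prod_{j\neq i}f_j$ (your directness check, applying $g_j(\Phi_a)\widehat{f_j}(\Phi_a)$ to a vanishing sum and noting this operator is the identity on $W_j$, is only a minor variant of the paper's argument via $\gcd(f_i,\widehat{f_i})=1$), part (3) is the same specialization $Z=\mathbb{F}_q^n$, and part (6) is the same appeal to Proposition~\ref{prop} plus irreducibility of $f_i$. Where you genuinely diverge is in (4) and (5). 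The paper proves (4) first by a hands-on argument: for $0\neq c\in W_i$ it takes the least $m$ with $c,\Phi_a(c),\ldots,\Phi_a^m(c)$ dependent, shows the irreducible $f_i$ divides the resulting annihilating polynomial, deduces $\dim W_i\geq k_i$, and forces equality by summing dimensions over the decomposition in (3); it then proves (5) separately by contradiction, assuming $\gcd(f_{\Phi_a\vert W_i},f_i)=1$ and exhibiting $f_i(A_i)=0$ in a block-diagonal representation. You instead obtain (4) and (5) simultaneously from the multiplicativity of characteristic polynomials over the direct sum in (3): since $f_i$ annihilates $\Phi_a\vert_{W_i}$, the minimal polynomial of the restriction divides the irreducible $f_i$, so $f_{\Phi_a\vert W_i}$ is, up to sign, a power of $f_i$ (or $1$), and comparing $\prod_i f_{\Phi_a\vert W_i}=f=(-1)^n f_1\cdots f_r$ through unique factorization forces each $W_i\neq 0$ and each exponent to equal $1$. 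This is shorter and cleaner, at the price of invoking the standard fact that the characteristic and minimal polynomials have the same irreducible factors, a fact the paper never uses (its cyclic-vector argument in (4) essentially reproves the piece of it that is needed). Both arguments are sound; yours buys brevity and a unified treatment of (4)--(5), while the paper's is more self-contained.
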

\begin{proof}
\begin{enumerate}
	\item  Assume that $w \in W_i$ then $f_i(A)w=0$,
	and so $f_i(A)\Phi_a(w)=f_i(A)Aw=Af_i(A)w=0 $.
	Thus $\Phi_a(w) \in W_i$ and so $W_i$ is an $\Phi_a$-invariant subspace.
	
	\item Set $\Tilde{f}_i(x)=\frac{f(x}{f_i(x)}$ for i= 1, \ldots , r , since $(\Tilde{f}_1(x),\ldots , \Tilde{f}_r(x)) = 1$, we can found some polynomials $a_1(x),\ldots , a_r(x)$ $\in \mathbb{F}_q[x]$.
	such that: $a_1(x)\Tilde{f}_1(x)+\ldots + a_r(x)\Tilde{f}_r(x)=1$, thus for every $z \in Z$ we have:
	$z= a_1(A)\Tilde{f}_1(A)z+\ldots+ a_r(A)\Tilde{f}_r(A)z$, set $z_i =a_i(A)\Tilde{f}_i(A)z  \in Z$. Then $f_i(A)z=a_i(A)f(A)z=0$, thus $z_i \in W_i \cap Z$, as a result we confirm that $Z = Z_1 + \ldots + Z_r.$
	Suppose that $z \in Z_i \cap \sum_{j \ne i}Z_j$, then $f_i(A)z=0$ and $\Tilde{f}_i(A)z = 0$, since $\mbox{\rm gcd}\,(f_i(x),\Tilde{f}_i(x))=1$, we have polynomials $a(x)$ and $b(x)$, $\in \mathbb{F}_q[x]$ such that $a(x)f_i(x) + b(x)\Tilde{f}_i(x)=1$ thus $z = a(A)f_i(A)z + b(A)\Tilde{f}_i(A)z = 0$. Finally $Z_i \cap \sum_{j \ne i}Z_j = {0}$ and so $Z = Z_1 \bigoplus \ldots  \bigoplus Z_r $.
	\item We just apply the assertion 2 with $Z=\mathbb{F}_q^n$.
	\item Let $m \geqslant 0$ be the smallest integer positive such that $c , \Phi_a(c) , \Phi_a^2(c), \ldots , \Phi_a^{m}(c)$ are linearly dependent, for all $c \in W_i$ then there exist elements $c_0, \ldots , c_{m-1} \in \mathbb{F}_q$ such that:
	\begin{equation}
	\Phi_a^m(c) = c_{k-1}\Phi_a^{m-1}(c) + \ldots + c_{2}\Phi_a^{2}(c)+c_1\Phi_a(c)+c_0c.
	\end{equation}
	Let $t(x)= x^m-c_{m-1}x^{m-1}-....-c_0$ be a polynomial element of $\mathbb{F}_q$. \\
	Since $(t(\Phi_a))(c) = 0$ and $(f_i(\Phi_a))(c) = 0$, it follows that $[(t(x), f_i(x))(\Phi_a)](c) = 0$, or $(t(x), f_i(x))$  is equal to 1 or to $f_i (x)$. thus $(t(x), f_i(x)) = f_i(x)$ and so $f_i(x)$ divides $t(x)$.
	as a result $k_i\leqslant \mbox{\rm deg}\, t(x) = k$, so  the vectors c, $\Phi_a(c), \ldots , \Phi_a^{k_i}(c)$ are
	linearly dependent, since $c \in W_i$ then $(f_i(\Phi_a))(c) = 0$, and from the minimality of $k$ we obtain
	$k = k_i$ . We know that $\dim W_i \geqslant k_i$,  $n = \dim_{\mathbb{F}_q}{\mathbb{F}_q^n}=\sum_{i=1}^{r}\dim_{\mathbb{F}_q}W_i \geqslant \sum_{i=1}^{r}k_i=\mbox{\rm deg}\, f = n$. Then $\dim_{\mathbb{F}}W_i=k_i$.

	\item Assume that $g^{(i)} = (g_{1}^{(i)}, \ldots , g_{k_i}^{(i)})$ is a basis of $W_i$ over $\mathbb{F}_q$, $i = 1, \ldots , r$ and let $A_i$ be the matrix of $\Phi_a \vert W_i$ with respect to that basis. set $\dot{f_i} = f_{\Phi_a \vert W_i}$. Assume that $( \dot{f_i}  , f_i ) = 1$.
	Hence there are polynomials $u(x), v(x) \in \mathbb{F}_q[x]$, such that $u(x)\dot{f_i}  + v(x)f_i(x) = 1$.
	Then $u(A_i)\dot f_i(A_i ) + v(A_i )f_i(A_i ) = I_n$. By the theorem of Cayley-Hamilton $\dot f_i(A_i )=0$, therefore $b(A_i)f_i(A_i ) = E$. Let us verify
	that $f_i(A_i ) = 0$, and we find a contradiction.
	
	By the assertion 3 we obtain that $g = (g_1^{(1)} , \ldots , g_{(k_i)}^{(1)}, \ldots , g_{1}^{(r)},\ldots , g_{k_r}^{(r)})$ is a basis of $\mathbb{F}_q^n$ and we can represent $\Phi_a$ with respect to the basis $g$ by the following matrix:
	\begin{equation}
	A' = \left( \begin{array}{ccccc}
	A_1 &  &  &  &  \\
	& A_2 &  &  &  \\
	&  & \ddots & &  \\
	&  &  & A_{r-1} &  \\
	&  &  &  & A_{r} \\
	\end{array} \right)
	\end{equation}
	And since $A'=P^{-1}AP$ where P is the transformation
	matrix from the standard basis of $\mathbb{F}_q^n$ to the basis $g$, thus:
	$$f_i(A')=\left( \begin{array}{ccccc}
	f_i(A_1) &  &  &  &  \\
	& f_i(A_2) &  &  &  \\
	&  & \ddots & &  \\
	&  &  & f_i(A_{r-1}) &  \\
	&  &  &  & f_i(A_{r}) \\
	\end{array} \right) = f_i(P^{-1}AP)= P^{-1}f_i(A)P $$
	Let $g_j^{(i)} = \lambda_{j_1}^{(i)}e_1 + \ldots + \lambda_{j_n}^{(i)}e_n , j = 1, \ldots , k_i$. Since $g_j \in W_i$ thus:
	$$
	f_i(A')
	\begin{pmatrix}
	0 \\[3mm]
	\vdots \\[3mm]
	1\\[3mm]
	\vdots \\[3mm]
	0
	\end{pmatrix}
	= P^{-1}f_i(A)P\begin{pmatrix}
	0 \\[3mm]
	\vdots \\[3mm]
	1\\[3mm]
	\vdots \\[3mm]
	0
	\end{pmatrix} =  P^{-1}f_i(A)\begin{pmatrix}
	\lambda_{j_1}^{(i)}\\[6mm]
	\vdots \\[6mm]
	\lambda_{j_n}^{(i)}
	\end{pmatrix}  = 0 $$
	
	Where 1 is on the $(k_1 + \ldots + k_{i-1} + j)$-th position. According to the last equation
	$f_i(A_i) = 0$, and $(f_i , \dot{f}_i ) \neq 1$. Since $f_i$ and $\dot{f_i}$ are polynomials of same degree $k_i$ and $f_i$ is monic and irreducible, we obtain that $f_{\Phi_a \vert W_i} = {(-1)}^{k_{i}}f_i$.
	\item Let $W$ be $\Phi_a$-invariant subspace of $\mathbb{F}_{q}^n $ and let ${0} \neq W \subseteq{W_i}$.
	Then by proposition \ref{prop} we have  $f_{\Phi_a \vert W}$ divides $f_i$. Since the polynomial $f_i$ is irreducible, $\dim_{\mathbb{F}_q}W$ = $\dim_{\mathbb{F}_q}W_i$ and $W = W_i$.
\end{enumerate}
\end{proof}

\begin{proposition}\label{prop 1}
Let $U$ be a $\Phi$-invariant subspace of $\mathbb{F}_q^n$. Then $U$ is a direct sum of $\Phi$-invariant subspaces $W_i$ of $\mathbb{F}_q^n$.
\end{proposition}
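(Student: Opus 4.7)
The plan is to deduce this directly from Theorem \ref{theo 1}, specifically from the decomposition in part 2 and the minimality in part 6. There is essentially no new work to do; the proposition is really a corollary.

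First I would apply Theorem \ref{theo 1}(2) with $Z = U$. Setting $U_i := U \cap W_i$, I get that each $U_i$ is a $\Phi_a$-invariant subspace and
\begin{equation*}
U \;=\; U_1 \oplus U_2 \oplus \cdots \oplus U_r.
\end{equation*}
This reduces the problem to analyzing each $U_i$ individually, since $U_i$ sits inside $W_i$.

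Next, for each $i$, I observe that $U_i$ is a $\Phi_a$-invariant subspace of $\mathbb{F}_q^n$ contained in $W_i$. By Theorem \ref{theo 1}(6), $W_i$ is a \emph{minimal} nonzero $\Phi_a$-invariant subspace of $\mathbb{F}_q^n$. Therefore either $U_i = \{0\}$ or $U_i = W_i$. Letting $I = \{ i \in \{1,\dots,r\} : U_i = W_i \}$, the decomposition above collapses to
\begin{equation*}
U \;=\; \bigoplus_{i \in I} W_i,
\end{equation*}
which is exactly the claim that $U$ is a direct sum of some of the $W_i$'s.

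The only slight subtlety — which is not really an obstacle but is worth stating cleanly — is the invocation of the minimality statement from part 6 to a subspace (namely $U_i$) that was obtained by intersection. One must check that $U_i$ is indeed $\Phi_a$-invariant (already given by part 2) and that the minimality in part 6 applies to any $\Phi_a$-invariant subspace of $\mathbb{F}_q^n$ sitting inside $W_i$, not merely to $W_i$ itself. Both are immediate from the statements already proved, so the proof amounts essentially to combining the two assertions of Theorem \ref{theo 1}.
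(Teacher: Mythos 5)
Your proof is correct and takes essentially the same route as the paper, which simply declares the result ``immediate from assertion 2 of Theorem~\ref{theo 1}''. In fact your version is slightly more complete: you make explicit the appeal to the minimality statement (assertion 6) needed to conclude that each intersection $U\cap W_i$ is either $\{0\}$ or all of $W_i$, a step the paper leaves implicit.
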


\begin{proof}
This result is immediate from assertion $2$ of Theorem \ref{theo 1}.
\end{proof}

Now, we investigate the decomposition of monomial codes into minimal invariant subspaces; for the Proof of Theorem 2, we adopt the same approach proposed in \cite{Radkova2009}.
\begin{theorem}
Let $C$ be a linear monomial code of length n over $\mathbb{F}_q$. then we have the following
facts:
\begin{enumerate}
	\item $C = W_{i_1}  \bigoplus \ldots \bigoplus W_{i_s}$ and $\dim_{\mathbb{F}_q}C = k_{i_1} + \ldots + k_{i_s}.$ Where $k_{i_r}$ is the dimension of $W_{i_r}$
	\item $f_{\Phi_a \vert C} ={(-1)}^{k}f_{i_1}\ldots f_{i_s} = g(x)$
	\item $c \in C$ if and only if $g(A)c = 0$
	\item $f_{\Phi_a \vert C}$ has the smallest degree concerning property (3)
	\item $\mbox{\rm rank}\,(g(A)) =  n-k $
\end{enumerate}
\end{theorem}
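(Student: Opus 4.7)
The plan is to combine Proposition~\ref{prop 1}, Proposition~\ref{prop}, and the minimality of the $W_i$ from Theorem~\ref{theo 1}(6). Since $C$ is a monomial code it is $\Phi_a$-invariant, so Proposition~\ref{prop 1} yields $C = \bigoplus_{i=1}^{r}(C \cap W_i)$. Each intersection $C \cap W_i$ is itself $\Phi_a$-invariant and sits inside the \emph{minimal} invariant subspace $W_i$, so it is either $\{0\}$ or all of $W_i$. Relabelling the nonzero indices as $i_1, \ldots, i_s$ gives $C = W_{i_1} \oplus \ldots \oplus W_{i_s}$, and $\dim C = k_{i_1} + \ldots + k_{i_s}$ follows from Theorem~\ref{theo 1}(4). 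This proves (1).

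For (2), I would apply Proposition~\ref{prop} iteratively to the decomposition above: the characteristic polynomial of the restriction to a direct sum of invariant subspaces is the product over the summands. Combined with Theorem~\ref{theo 1}(5), this gives
$$f_{\Phi_a|C}(x) \;=\; \prod_{r=1}^{s}(-1)^{k_{i_r}} f_{i_r}(x) \;=\; (-1)^{k} f_{i_1}(x)\cdots f_{i_s}(x) \;=\; g(x),$$
with $k = \dim C = \sum_r k_{i_r}$.

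The content of (3) is that $\ker g(A) = C$. The inclusion $C \subseteq \ker g(A)$ is Cayley-Hamilton applied to $\Phi_a|_C$. For the reverse inclusion I would mimic the Bezout argument already used in the proof of Theorem~\ref{theo 1}(2): the factors $f_{i_r}$ are pairwise coprime irreducibles, so setting $\tilde g_r(x) = g(x)/f_{i_r}(x)$ we have $\gcd(\tilde g_1,\ldots,\tilde g_s) = 1$, and the standard primary-decomposition identity yields $\ker g(A) = \bigoplus_{r=1}^s \ker f_{i_r}(A) = \bigoplus_{r=1}^s W_{i_r} = C$. This gives (3).

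For (4), any polynomial $h$ satisfying property (3) must in particular annihilate $C$, hence is divisible by the minimal polynomial $m$ of $\Phi_a|_C$. Because $g$ is, up to sign, a product of \emph{distinct} irreducibles, it is square-free and agrees with $\pm m$; therefore $\deg h \ge \deg g$, proving the minimality. Finally, (5) is immediate from the rank-nullity theorem applied to $g(A) : \mathbb{F}_q^n \to \mathbb{F}_q^n$, whose kernel has dimension $k$ by (3). I expect the only delicate point to be the Bezout step in (3); the remaining parts are essentially bookkeeping once the primary decomposition of Theorem~\ref{theo 1} is in place.
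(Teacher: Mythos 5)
Your proposal is correct, and parts (1), (2) and (5) follow the paper's proof almost verbatim: the paper also gets (1) from Theorem~\ref{theo 1} (its assertion 2 plus minimality), gets (2) from the block-diagonal representation of $\Phi_a\vert_C$, and gets (5) from rank--nullity. Where you diverge is in (3) and (4). For the reverse inclusion in (3), the paper decomposes an arbitrary $a$ with $g(A)a=0$ over the full primary decomposition $\mathbb{F}_q^n=W_1\oplus\ldots\oplus W_r$, isolates the component $b$ lying outside $W_{i_1},\ldots,W_{i_s}$, and kills it with a Bezout identity between $g(x)$ and $h(x)=f(x)/g(x)$; you instead apply the Bezout/primary-decomposition identity to the factors of $g$ itself, obtaining $\ker g(A)=\bigoplus_r \ker f_{i_r}(A)=\bigoplus_r W_{i_r}=C$ directly. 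The two arguments are the same coprimality trick packaged differently, and yours is slightly more self-contained. For (4) the difference is more substantial: the paper takes a hypothetical annihilating polynomial $e(x)$ of smaller degree, shows by the division algorithm that $e\mid g$, and derives a contradiction by comparing $C$ with $\widehat{C}=W_{i_1}\oplus\ldots\oplus W_{i_m}$ (a step whose write-up in the paper is rather murky); you instead observe that since $g$ is a product of distinct irreducibles it is square-free, hence coincides (up to sign) with the minimal polynomial of $\Phi_a\vert_C$, so any polynomial annihilating $C$ is divisible by it and has degree at least $\deg g$. This is a cleaner and arguably more transparent route to the same minimality statement, and it buys you an explicit identification of the minimal polynomial of $\Phi_a\vert_C$ as a byproduct.
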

\begin{proof}
\begin{enumerate}
	\item A direct application of the theorem\ref{theo 1}.
	
	\item Let  $(g_1^{(i_r)}, \ldots , g_{k_{i_r}}^{(i_r)})$ be a basis of $W_{i_r}$ over $\mathbb{F}_q$, $r = 1, \ldots , s$ and let $A_{i_r}$ be the matrix of $\Phi_{a} \vert W_{i_r}$ with respect to that basis set $\dot{f_{i_r}} = f_{\Phi_a \vert W_{i_r}}$. Then, $(g_1^{(i_1)}, \ldots , g_{k_{i_1}}^{(i_1)}, \ldots , g_1^{(i_s)}, \ldots , g_{k_{i_s}}^{(i_s)})$ is a basis of $C$ over $\mathbb{F}_q$ and $\Phi_{a} \vert C$ is represented to that basis, by the following matrix:
	\begin{equation}
	A' = \left( \begin{array}{ccccc}
	A_{i_1} &  &  &  &  \\
	& A_{i_2} &  &  &  \\
	&  & \ddots & &  \\
	&  &  & A_{i_{s-1}} &  \\
	&  &  &  & A_{i_s} \\
	\end{array} \right)
	\end{equation}
	$f_{\Phi_{a} \vert C}(x)=\dot{f_{i_1}}(x)\ldots \dot{f_{i_s}}(x)=(-1)^{k_{i_1}+\ldots+k_{i_1}}f_{i_1}\ldots f_{i_s}$.
	
	\item Let $a \in C$ then $a = w_{i_1}+\ldots+w_{i_s}$ for where $w_{i_r} \in W_{i_r}$ and $r = 1 ,\ldots s$, and $g(A)a = {(-1)}^{k}[(f_{i_1}\ldots f_{i_s})(A)w_{i_1}+\ldots+(f_{i_1}\ldots f_{i_s})(A)w_{i_s}]=0$.
	Conversely, assume that $g(A)a=0$ for all $a \in C$ for some $a \in \mathbb{F_q}^n$ and let verify that $a \in C$
	by applying the property $3$ of theorem \ref{theo 1} we have $a = w_{i_1}+\ldots+w_{i_t}$, where $w_{i_r} \in W_{i_r}$
	$g(A)a =  {(-1)}^{k}[(f_{i_1}\ldots f_{i_s})(A)w_1 + \ldots + (f_{i_1}\ldots f_{i_s})(A)w_t]$.
	So that $g(A)(w_{j_1} + \ldots + w_{j_l})$ where $\{j_1, \ldots , j_l\} \in \{1, \ldots ,t\}\textbackslash \{i_1, ...,i_s\}.$
	Let $b = w_{j_1} + \ldots + w_{j_l}$ and $h(x) = \frac{{(-1)}^n(x^n-\prod_{i=0}^{n-1}a_i)}{g(x)}$. Since $(h(x), g(x)) = 1$, there are polynomials
	$E(x), F(x) \in \mathbb{F}_q[x]$ such that $E(x)h(x) + F(x)g(x) = 1$. Therefore, $v = E(A)h(A)v +
	F(A)g(A)v = 0$ and so $a \in C$.
	
	\item  Suppose that $e(x) \in \mathbb{F}_q[x]$ is a nonzero polynomial of smallest degree such that
	$e(A)a = 0$ for all $a \in C$. By the division algorithm in $\mathbb{F}_q[x]$ there are polynomials
	$q(x), r(x)$ such that $g(x) = e(x)q(x) + r(x)$, where $deg(r(x)) < deg(e(x))$. Then for
	each vector $a \in C$, we have $g(A)a = q(A)e(A)a + r(A)a$ and hence, $r(A)a = 0$.
	However, this contradicts the $e(x)$ choice unless $r(x) = 0$. Thus $e(x)$ divides $g(x)$. Then
	$e(x)$ is a product of some irreducible factors of $g(x)$, and without loss of generality
	we may assume that $e(x) = {(-1)}^{(k_{i_1}+k_{i_2} +\ldots  +k_{i_m})}f_{i_1}f_{i_2}\ldots f_{i_m}
	$ and $m < s$. Let  $\widehat{C}=W_{i_1}\bigoplus \ldots \bigoplus W_{i_m}$, clearly $\widehat{C} \subseteq C$ then $e(x) = f_{\Phi_a \vert \widehat{C}}$.and since $g(A)a = 0$ for all $a \in C$ we obtain that
	$C \subseteq \widehat{C}$. This contradiction proves the statement.
	
	\item By property (3), C is the solution space of the homogeneous set of equations $g(A)a = 0$. Then:
	$\dim_{\mathbb{F}_q}C = k = n - \mbox{\rm rank}(g(A))$, which proves the statement.
\end{enumerate}
\end{proof}

\begin{example}
Let consider the particular case where $n =4$, $q=5$ and $a = (a_1=1, a_2=1, a_3=1, a_4=3)$.
Then:

\begin{equation}
A = \begin{pmatrix}
0 & 0 & 0 & 1  \\
3& 0 & 0 & 0  \\
0 & 4 &0 &0     \\
0 & 0 & 3 & 0\\
\end{pmatrix}
\end{equation}

In $F_5[t]$, $P_{A}(t) = (t^4 + 4) = (t+1)(t+2)(t+3)(t+4)$
Let $C$ be a monomial code over $\mathbb{F}_{5}$  defined as follows:
$C = \mbox{\rm ker}\,(A +I) \bigoplus \mbox{\rm ker}\,(A +2I)  = <1,2,2,4> \bigoplus <1,1,3,3>  = \mbox{\rm ker}\,(A^2+3A+2I) =\mbox{\rm ker}\, \begin{pmatrix}
2 & 0 & 3 & 3  \\
4& 2 & 0 & 3  \\
2 & 2 &2 &0     \\
0 & 2 & 4 & 2\\
\end{pmatrix} =<(1,3,2,1),(0,4,1,1)>$ and so $\dim_{\mathbb{F}_q^n}(C) = 2$.

\end{example}

\section{Characteristic subspaces and monomial codes}
Let $A$ be a linear map of a finite-dimensional vector
space $\mathbb{F}_{q}^n$ over a field $\mathbb{F}_{q}$. An $A$-invariant subspace is called characteristic if it is invariant under all automorphisms that commute with $A$. If the invariant subspace remains invariant for all linear maps of
$\mathbb{F}_{q}^n$ that commute with $A$ then the subspace is called hyperinvariant for $A$.

\begin{remark}
\begin{itemize}
	\item[-]
	Let $F$ be an $A$-invariant subspace, then $XF$ is $A$-invariant for all $X\in C(A)$.
	\item[-]  $F$ is an $A$ invariant subspace if and only if $\bar F=S^{-1}F$ is $J$-invariant with $A=SJS^{-1}$. Then, we can consider the matrix $A_{a}$.
\end{itemize}
\end{remark}

In order to construct characteristic subspaces, it is helpful to describe  the centralizer of the matrix $A$

Remember that the centralizer $C(A)$ of $A$ is the set of the matrices $X$  commuting with $A$.

\begin{proposition}
\begin{itemize}
	\item[a)] The centralizer $C(A_{a})$ of $A_{a}$ is the set of matrices $X$ with
	\begin{equation*}
		X =
		\begin{pmatrix}
			x_{n-1}& ax_0 & ax_1 & ax_2 & \cdots & ax_{n-3} & ax_{n-2}\\
			x_{n-2}& x_{n-1} & ax_0 & ax_1 & \cdots & ax_{n-4} & ax_{n-3}\\
			
			\vdots  &  & \ddots & \ddots& & \\
			\vdots  &   &  & \ddots&\ddots & \\	
			x_2& x_3 & x_4 & x_5 & \cdots & ax_{0} & ax_{1}  \\
			x_1& x_2 & x_3 & x_4 & \cdots & x_{n-1} & ax_{0} & \\
			x_0& x_1 & x_2 & x_3 & \cdots & x_{n-2} & x_{n-1} & \\
		\end{pmatrix} = x_{n-1}I+x_{n-2}A_{a}+\ldots +x_{0}A^{n-1}_{a}
	\end{equation*}
	\item[b)] Let $\bar A =TA_{a}T^{-1}$  be an equivalent matrix to $A_{a}$. Then $C({\bar A})=\{Y\mid Y=
	TXT^{-1}\}$.
\end{itemize}
\end{proposition}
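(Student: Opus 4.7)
The plan is to prove part (a) by first establishing that the centralizer $C(A_a)$ coincides with the polynomial algebra $\mathbb{F}_q[A_a]$ of dimension $n$, and then identifying the explicit matrix pattern obtained from $\sum_{k=0}^{n-1} x_{n-1-k} A_a^k$ with the displayed form. Part (b) will then follow from a routine conjugation argument.

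To show $C(A_a) = \mathbb{F}_q[A_a]$, I would observe that $A_a$ is the companion matrix of $x^n - a$, so its minimal polynomial agrees with its characteristic polynomial (both equal $x^n - a$ up to sign) and $A_a$ is therefore nonderogatory, i.e.\ cyclic. By the classical theorem that the centralizer of a nonderogatory matrix coincides with the polynomial algebra generated by that matrix, one has $C(A_a) = \Span\{I, A_a, A_a^2, \ldots, A_a^{n-1}\}$ and $\dim C(A_a) = n$. To match the displayed form I would compute the powers of $A_a$ directly from the action $A_a e_j = e_{j+1}$ for $j<n$ and $A_a e_n = a\, e_1$; iterating gives that $A_a^k$ sends $e_j$ to $e_{j+k}$ when $j+k\leq n$ and to $a\, e_{j+k-n}$ when $j+k>n$. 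Thus $A_a^k$ is the identity with its rows cyclically shifted downward by $k$ positions, with the wrapped entries multiplied by the scalar $a$. Forming the combination $x_{n-1}I + x_{n-2}A_a + \ldots + x_0 A_a^{n-1}$ and tracking which power contributes to each entry then reproduces exactly the matrix $X$ in the statement; since the $x_j$ are arbitrary and the $n$ powers are linearly independent, the set of such matrices is all of $C(A_a)$.

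For (b), the argument is purely formal: a matrix $Y$ satisfies $Y\bar A = \bar A Y$ if and only if $(T^{-1}YT)A_a = A_a(T^{-1}YT)$, so $C(\bar A) = T\,C(A_a)\,T^{-1}$, which is the description asserted. The main obstacle is purely notational: carefully keeping track of indices in part (a) so that the sum $\sum_{k=0}^{n-1} x_{n-1-k}A_a^k$ indeed yields the doubly patterned matrix displayed in the statement, with the factor $a$ appearing on the entries strictly above the main diagonal and absent from those on or below it. The conceptual content—that a companion matrix's centralizer is its polynomial algebra—is standard and requires no further work.
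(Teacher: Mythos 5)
Your argument is correct, but it takes a genuinely different route from the paper for part (a). The paper simply solves the commutation system $XA_a = A_aX$ directly (an entry-by-entry linear-algebra computation that produces the displayed pattern and, from it, the identity $X = x_{n-1}I + x_{n-2}A_a + \cdots + x_0A_a^{n-1}$), whereas you invoke the classical structural theorem that a nonderogatory (cyclic) matrix has centralizer equal to its polynomial algebra: since $A_a$ is the companion matrix of $x^n - a$, its minimal and characteristic polynomials coincide, so $C(A_a) = \Span\{I, A_a, \ldots, A_a^{n-1}\}$, and the explicit matrix form then falls out of the computation $A_a^k e_j = e_{j+k}$ for $j+k\leq n$ and $A_a^k e_j = a\,e_{j+k-n}$ otherwise, which indeed reproduces the stated pattern with the factor $a$ strictly above the diagonal. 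Your route buys a conceptual shortcut: both inclusions of the set equality come for free from the cyclic-vector theorem, and the only remaining work is bookkeeping of powers, which you outline correctly. The paper's route is more elementary and self-contained (no appeal to the nonderogatory-centralizer theorem, which it never cites), and it yields the two descriptions of $X$ simultaneously from the same system, but at the cost of a more tedious computation that the paper leaves entirely implicit. For part (b) your conjugation argument is the same as the paper's.
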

\begin{proof}
a) It suffices to solve the system $XA_{a}=A_{a}X$.

\noindent		b) $XA_{a}= XT^{-1}AT =T^{-1}ATX$, $TXT^{-1}AT T^{-1}=TT^{-1}ATXT^{-1}$.
\end{proof}

Let $x=(x_{1},\ldots , x_{n})$ be a vector in $\mathbb{F}_{q}^n$ and consider the subspace $C_{A_{0}}(x)$ of $\mathbb{F}_{q}^n$ generated by $x$:  $C_{A_{0}}(x)= \{x,A_{0}x,\ldots , A_{0}^{r}x,\ldots \}$

Cayley-Hamilton theorem ensures that there exists a $r\leq n-1$ such that $C_{A_{0}}(x)=[x,A_{0}x,\ldots ,A_{0}^rx]$, and consider $r$ the least number with this property.

\begin{corollary}
The subspace $C_{A_{0}}(x)$ is characteristic subspace. (In fact, it is an hyperinvariant subspace).
\end{corollary}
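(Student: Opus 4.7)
The plan is to use the explicit description of the centralizer $C(A_{0})$ given in the preceding proposition, which identifies every matrix commuting with $A_{0}$ as a polynomial in $A_{0}$. Once we have this, hyperinvariance of $C_{A_{0}}(x)$ reduces to a one-line polynomial calculation.

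First I would unpack the definition of $C_{A_{0}}(x)$. By the Cayley--Hamilton theorem the sequence $x, A_{0}x, A_{0}^{2}x, \ldots$ generates a finite-dimensional subspace, and taking $r$ minimal with $A_{0}^{r+1}x \in [x, A_{0}x, \ldots, A_{0}^{r}x]$ gives
\begin{equation*}
C_{A_{0}}(x) = \{\,p(A_{0})\,x : p \in \mathbb{F}_{q}[t]\,\}.
\end{equation*}
Thus an element of $C_{A_{0}}(x)$ is exactly something of the form $p(A_{0})x$, and in particular $A_{0}$-invariance is immediate since $A_{0}\cdot p(A_{0})x = (t\,p)(A_{0})x$.

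The key step is then to invoke the previous proposition: every $Y \in C(A_{0})$ can be written as
\begin{equation*}
Y = y_{n-1}I + y_{n-2}A_{0} + \cdots + y_{0}A_{0}^{\,n-1} = q(A_{0})
\end{equation*}
for some polynomial $q \in \mathbb{F}_{q}[t]$. Taking an arbitrary $v = p(A_{0})x \in C_{A_{0}}(x)$ and any such $Y = q(A_{0})$, we have
\begin{equation*}
Yv = q(A_{0})\,p(A_{0})\,x = (qp)(A_{0})\,x \in C_{A_{0}}(x).
\end{equation*}
This proves that $C_{A_{0}}(x)$ is stable under every element of the centralizer, not merely under the invertible ones, so $C_{A_{0}}(x)$ is hyperinvariant, and in particular characteristic.

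For the general case of a matrix $\bar{A} = T A_{0} T^{-1}$ similar to $A_{0}$, I would note that by part (b) of the preceding proposition $C(\bar A) = T\,C(A_{0})\,T^{-1}$, so $C_{\bar A}(Tx) = T\,C_{A_{0}}(x)$ and the same polynomial argument transports verbatim. The only possible obstacle I anticipate is purely a matter of bookkeeping with the centralizer description; the substantive fact --- that $C(A_{0})$ is precisely $\mathbb{F}_{q}[A_{0}]$ --- has already been supplied, so no nontrivial estimation or structural lemma is left to prove.
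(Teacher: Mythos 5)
Your proof is correct and follows essentially the same route as the paper: it invokes the preceding proposition to write every element of the centralizer as a polynomial in $A_{0}$, and then observes that $C_{A_{0}}(x)$, being spanned by $x, A_{0}x,\ldots,A_{0}^{r}x$, is closed under applying any polynomial in $A_{0}$ (the paper just writes this out with explicit coefficients instead of as a product $(qp)(A_{0})x$). Your closing remark about transporting the argument through $\bar A=TA_{0}T^{-1}$ is the content of the paper's subsequent corollary and is consistent with it.
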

\begin{proof}
$$\begin{array}{l} (x_{n-1}I+x_{n-2}A_{a}+\ldots +x_{0}A^{n-1}_{a})(\alpha_{0}x+\alpha_{1}A_{0}x+\ldots +\alpha_{r}A^rx)=\\
x_{n-1}\alpha_{0}x+(x_{n-1}\alpha_{1}+x_{n-2}\alpha_{0})\ldots +x_{0}\alpha_{r}A_{0}^{n+r-1}x=\\ \sum_{i=0}^r\beta_{i}A^ix\in C_{A_{0}}(x).\end{array}$$
\end{proof}
\begin{corollary}
Let $A$ be a simple monomial matrix. Then, the subspace $C_{A}(x)$ is a characteristic subspace.
\end{corollary}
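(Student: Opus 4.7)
The plan is to bootstrap the previous corollary (which handles the normal form $A_{a}$) to an arbitrary simple monomial matrix $A$ by using the similarity $A = S A_{a} S^{-1}$ provided by the earlier proposition, together with the explicit description of $C(A)$ in terms of $C(A_{a})$.

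First I would set $y := S^{-1}x$ and observe that the Krylov subspaces transform covariantly under similarity: since $A^{k} = S A_{a}^{k} S^{-1}$, we have $A^{k} x = S A_{a}^{k} y$, hence
\[
C_{A}(x) \;=\; \operatorname{Span}\{A^{k} x : k \geq 0\} \;=\; S \cdot C_{A_{a}}(y).
\]
This identifies $C_{A}(x)$ as the image under $S$ of a subspace already known, by the preceding corollary, to be hyperinvariant (a fortiori characteristic) for $A_{a}$.

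Next I would use part (b) of the centralizer proposition: every $Y \in C(A)$ can be written $Y = S X S^{-1}$ for some $X \in C(A_{a})$ (and this correspondence is a bijection). Given any such $Y$ and any $v \in C_{A}(x)$, write $v = S w$ with $w \in C_{A_{a}}(y)$; then
\[
Y v \;=\; S X S^{-1} \, S w \;=\; S(Xw).
\]
Since $C_{A_{a}}(y)$ is invariant under every element of $C(A_{a})$ (this is precisely the content of the previous corollary, in its hyperinvariant form), $Xw \in C_{A_{a}}(y)$, and therefore $Yv \in S \cdot C_{A_{a}}(y) = C_{A}(x)$. Thus $C_{A}(x)$ is stable under every $Y \in C(A)$, and in particular under every automorphism of $\mathbb{F}_{q}^{n}$ commuting with $A$, which is exactly the defining property of a characteristic subspace.

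There is no real obstacle here; the argument is essentially a transport-of-structure under similarity. The only point one must be careful about is to invoke the right form of the centralizer bijection (part (b) of the centralizer proposition), so that every commuting matrix for $A$ really is realized as the conjugate of a commuting matrix for $A_{a}$, not merely a subset thereof; this is what guarantees that invariance transfers in both directions. As a byproduct, the same argument shows that $C_{A}(x)$ is in fact hyperinvariant, matching the parenthetical strengthening in the previous corollary.
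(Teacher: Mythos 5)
Your proof is correct and follows essentially the same route as the paper: the paper's proof simply invokes the similarity $A=TA_{0}T^{-1}$ and says to repeat the computation from the previous corollary, while you make the same reduction explicit by transporting the Krylov subspace ($C_{A}(x)=S\,C_{A_{a}}(S^{-1}x)$) and the centralizer (part (b) of the centralizer proposition) through the conjugation. Your version is a slightly cleaner write-up of the identical idea, and it correctly yields the hyperinvariance strengthening as well.
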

\begin{proof}
It is enough to take into account that $A=TA_0T^{-1}$ and repeat the previous calculations for this case.

\end{proof}
\begin{proposition}
Let $F=\mbox{\rm Ker}\, (A-\lambda I)$ be the subspace of  eigenvectors of $A$ of eigenvalue $\lambda $ of $A$. Then, $F$ is a characteristic subspace of $A$.

In general,  let $G=\mbox{\rm Ker}\, (p(A))$  be a subspace for some polynomial in $A $. Then $G$ is a characteristic subspace.
\end{proposition}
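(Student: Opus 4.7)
The plan is to reduce both claims to the single observation that any matrix commuting with $A$ automatically commutes with every polynomial in $A$, and then to combine this with the fact that the kernel of a linear map is preserved by any map commuting with it. Since the first statement (with $p(x)=x-\lambda$) is a special case of the second, I will prove the general statement directly and obtain the eigenspace case as a corollary.

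First I would verify that $G=\ker p(A)$ is $A$-invariant: for $v\in G$ we have $p(A)(Av)=A\,p(A)v=0$, since $A$ trivially commutes with $p(A)$. This places $G$ in the class of $A$-invariant subspaces, so the notion of being characteristic applies.

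Next, to show $G$ is preserved by every automorphism $T$ commuting with $A$, I would argue that $T$ commutes with every power $A^k$ (by induction: $TA^{k+1}=TA\cdot A^k=AT A^k=A\cdot A^k T=A^{k+1}T$), hence with every polynomial in $A$, and in particular with $p(A)$. Consequently, for $v\in G$,
\begin{equation*}
p(A)(Tv)=T\,p(A)v=T\cdot 0=0,
\end{equation*}
so $Tv\in G$. This shows $G$ is invariant under all $T\in C(A)\cap\mathrm{GL}_n(\mathbb{F}_q)$, which is precisely the defining property of a characteristic subspace. Applying this with $p(x)=x-\lambda$ yields the first part.

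There is no real obstacle here; the only subtlety worth flagging is that the argument never uses the invertibility of $T$, so in fact $G$ is preserved by the full centralizer $C(A)$, i.e.\ $G$ is hyperinvariant (which is consistent with the earlier remarks about $C_{A_0}(x)$ being hyperinvariant). Thus the statement as given is an immediate consequence of the polynomial calculus in $C(A)$.
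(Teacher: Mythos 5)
Your proposal is correct and follows essentially the same route as the paper: any map commuting with $A$ commutes with $p(A)$ and therefore preserves $\ker p(A)$, the eigenspace case being the instance $p(x)=x-\lambda$ (which the paper just writes out separately). Your added remark that invertibility of $T$ is never used, so the subspace is in fact hyperinvariant, is accurate and consistent with the paper's parenthetical comments elsewhere.
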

\begin{proof}
$\forall v \in F$, and for all $X\in C(A)$, $ AXv=XAv=X(\lambda v)=\lambda Xv$, then $Xv\in F$.

$\forall v\in G$, and for all $X\in C(A)$, $ p(A)Xv=Xp(A)v=X0=0$, then $Xv\in G$.

\end{proof}

\begin{theorem} A linear code $C$ with length $n$ over the field $\mathbb{F}_{q}$ is monomial if, and only if, $C$ is an $A$-characteristic subspace of $\mathbb{F}^n_{q}$.
\end{theorem}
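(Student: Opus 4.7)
The plan is to prove both directions by chaining together results that are already in place in the excerpt, so the argument becomes almost a bookkeeping exercise once the correct lemmas are identified.

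For the easy direction ($\Leftarrow$), I would observe that the matrix $A$ is itself a nonsingular matrix that commutes with itself, hence $A\in C(A)$ is an automorphism of $\mathbb{F}_q^n$ commuting with $A$. Therefore, if $C$ is $A$-characteristic, then in particular $A(C)\subseteq C$, i.e.\ $C$ is $A$-invariant, which by the earlier proposition is exactly the monomial property.

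For the nontrivial direction ($\Rightarrow$), I would use the structural decomposition theorems already proved. Suppose $C$ is a monomial code, so $C$ is $A$-invariant. By Theorem~2(1) together with Theorem~2(3), there exist indices $i_1,\ldots,i_s$ such that
\begin{equation*}
C=W_{i_1}\oplus\cdots\oplus W_{i_s}=\Ker\bigl(g(A)\bigr),
\end{equation*}
where $g(x)=(-1)^{k}f_{i_1}(x)\cdots f_{i_s}(x)\in\mathbb{F}_q[x]$. Now I would invoke the proposition stating that $\Ker\bigl(p(A)\bigr)$ is a characteristic subspace for every polynomial $p$: indeed, for any $X\in C(A)$ and any $v\in\Ker(g(A))$, one has $g(A)Xv=Xg(A)v=0$, so $Xv\in C$. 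In particular this holds for every automorphism $X$ commuting with $A$, which is precisely the definition of an $A$-characteristic subspace.

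Putting the two directions together yields the equivalence. I do not anticipate a real obstacle; the only mild subtlety worth flagging explicitly is that the proposition just cited actually gives the stronger statement that $\Ker(g(A))$ is invariant under all of $C(A)$ (hence hyperinvariant), and characteristic invariance is the weaker condition of invariance only under the invertible elements of $C(A)$, so the implication we need is immediate. The other point worth noting is that the decomposition step relies essentially on the hypothesis $\gcd(q,n)=1$, which is already assumed at the start of Section~2 and makes the factorization of $f(x)=(-1)^{n}(x^{n}-\prod a_i)$ into distinct irreducibles valid.
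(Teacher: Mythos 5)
Your argument is correct, and it is worth noting that the paper itself states this theorem without giving any proof, so there is nothing to match it against line by line; what you have written fills that gap using exactly the ingredients the paper has prepared. Your forward direction chains Theorem~2 (a monomial code satisfies $C=W_{i_1}\oplus\cdots\oplus W_{i_s}=\Ker\bigl(g(A)\bigr)$ for the polynomial $g$) with the Section~3 proposition that $\Ker\bigl(p(A)\bigl)$ is characteristic for any polynomial $p$, and your backward direction correctly reduces characteristic invariance to $A$-invariance (indeed, by the paper's definition a characteristic subspace is already required to be $A$-invariant, so this direction is immediate). You also rightly flag that what you actually prove is the stronger hyperinvariance, and that the decomposition step uses $\gcd(q,n)=1$. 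The one alternative worth comparing is the route the paper's centralizer computation suggests: since $A$ is similar to $A_a$, which is a companion-type (nonderogatory) matrix, the proposition on $C(A_a)$ shows every $X\in C(A)$ is a polynomial in $A$; hence any $A$-invariant subspace is automatically invariant under all of $C(A)$, i.e.\ hyperinvariant and in particular characteristic. That argument is shorter, bypasses Theorem~2 entirely, and does not need the separability of $x^n-\prod_{i}a_i$ (so no $\gcd(q,n)=1$ hypothesis), whereas your version has the advantage of exhibiting $C$ explicitly as the kernel of a concrete polynomial in $A$, which ties the characteristic property directly to the code's defining polynomial $g$.
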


\section{Generalized monomial codes}

\begin{definition}\label{GMC}	
A linear code $C$ of length $n$ over the field $\mathbb{F}_q$ is called generalized monomial if, and only if, $C$ is an invariant subspace of $\mathbb{F}_q^n$ for any monomial matrix.
\end{definition}

In order to obtain the invariant subspaces of any monomial matrix, we analyze the characteristic polynomial of the matrix.

It is known that any permutation $\sigma$  of $\{1,\ldots , n\}$  can be written as a product
of disjoint cycles, then we have that
\begin{proposition}
Any  monomial matrix can be written as a product
of disjoint  simple monomial matrices (as \eqref{Monomial_matrix}).
\end{proposition}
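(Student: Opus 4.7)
The plan is to combine the decomposition $A=\diag(a_{0},\ldots,a_{n-1})P_{\sigma}$ supplied by Lemma 2 with the disjoint-cycle decomposition of the permutation $\sigma$, and then juggle commutations to redistribute the diagonal across the cycles. First I would write $\sigma=\sigma_{1}\sigma_{2}\cdots\sigma_{k}$ as a product of pairwise disjoint cycles, with $\sigma_{i}$ supported on a subset $S_{i}\subseteq\{0,\ldots,n-1\}$; the $S_{i}$ are then mutually disjoint. Accordingly $P_{\sigma}=P_{\sigma_{1}}P_{\sigma_{2}}\cdots P_{\sigma_{k}}$, since disjoint cycles commute and compose to $\sigma$.

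Next I would split the diagonal to match: for each $i$, let $D_{i}$ be the diagonal matrix whose entries agree with $\diag(a_{0},\ldots,a_{n-1})$ on $S_{i}$ and equal $1$ off $S_{i}$ (fixed points of $\sigma$ can be absorbed into any one factor without affecting anything). Since the $S_{i}$ partition the support of $\sigma$, we have $\diag(a_{0},\ldots,a_{n-1})=D_{1}D_{2}\cdots D_{k}$. Setting $A_{i}:=D_{i}P_{\sigma_{i}}$, each $A_{i}$ is a monomial matrix that acts as the identity on coordinates outside $S_{i}$ and, restricted to the coordinates in $S_{i}$ listed in the cyclic order prescribed by $\sigma_{i}$, takes exactly the shape \eqref{Monomial_matrix}. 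This is the meaning of "disjoint simple monomial matrix" inside an $n\times n$ framework: the non-trivial blocks have pairwise disjoint index sets.

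The key calculation is to verify that the factors regroup correctly. The observation I would rely on is that $D_{i}$ commutes with $P_{\sigma_{j}}$ whenever $i\neq j$: $P_{\sigma_{j}}$ only permutes coordinates in $S_{j}$, and on $S_{j}$ the diagonal of $D_{i}$ is constant equal to $1$, so $D_{i}P_{\sigma_{j}}=P_{\sigma_{j}}D_{i}$. Combined with the commutation of disjoint cycle matrices, this lets me rewrite
$$A=\diag(a_{0},\ldots,a_{n-1})P_{\sigma}=(D_{1}\cdots D_{k})(P_{\sigma_{1}}\cdots P_{\sigma_{k}})=(D_{1}P_{\sigma_{1}})(D_{2}P_{\sigma_{2}})\cdots(D_{k}P_{\sigma_{k}})=A_{1}A_{2}\cdots A_{k},$$
which is the factorization claimed.

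The main obstacle is pedagogical rather than technical: the statement calls for $A$ to be a product of \emph{simple} monomial matrices, whose defining form \eqref{Monomial_matrix} is written with the cycle $(0,1,\ldots,n-1)$ in mind, so one must be explicit that each $A_{i}$ is a simple monomial matrix only after re-indexing its active coordinates according to the cycle $\sigma_{i}$ (equivalently, up to a simultaneous permutation of rows and columns). Once this convention is fixed, the disjointness of the $S_{i}$ makes the commutation step above routine and gives the desired decomposition.
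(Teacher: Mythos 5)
Your argument is correct and is essentially the route the paper intends: the proposition is stated immediately after recalling the disjoint-cycle decomposition of $\sigma$ and is given no further proof, so your combination of Lemma~2 with the splitting $\diag(a_0,\ldots,a_{n-1})=D_1\cdots D_k$ and the commutation of $D_i$ with $P_{\sigma_j}$ for $i\neq j$ simply supplies the omitted details. One small polish: treat each fixed point of $\sigma$ as its own $1\times 1$ simple monomial block (entry $a_j$) rather than ``absorbing'' it into another factor, since otherwise that factor need not act as the identity outside the support of its cycle, which conflicts with your own description of the $A_i$.
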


\begin{proposition}
The characteristic polynomial of a  monomial matrix is the product of the characteristic polynomials of the simple monomial factors.
\end{proposition}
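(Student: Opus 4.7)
By the preceding proposition, I would start by writing the monomial matrix $M$ as a product of disjoint simple monomial matrices corresponding to the disjoint-cycle decomposition $\sigma = \sigma_1 \cdots \sigma_k$ of its underlying permutation, where $M = D P_\sigma$ with $D$ diagonal. Let $S_j \subseteq \{1,\ldots,n\}$ be the support of the cycle $\sigma_j$; the sets $S_j$ partition $\{1,\ldots,n\}$. My first move would be to reorder the canonical basis so that the indices in $S_1$ come first, then those of $S_2$, and so on, i.e.\ to conjugate $M$ by the permutation matrix $Q$ realizing this reordering. Conjugation by $Q$ leaves the characteristic polynomial unchanged, so it suffices to compute $\chi_{Q^{-1}MQ}(x)$.

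Next I would check that $Q^{-1}MQ$ is block diagonal, with the $j$-th diagonal block of size $|S_j|$ being exactly the simple monomial matrix $A_j$ associated with the cycle $\sigma_j$ and the diagonal entries of $D$ indexed by $S_j$. This rests on two observations: first, the map $x \mapsto Mx$ sends $e_i$ to a scalar multiple of $e_{\sigma(i)}$, so it preserves each coordinate subspace $V_j = \Span(e_i : i \in S_j)$; second, once the basis of $V_j$ is ordered along the cycle $\sigma_j = (i_1\ i_2\ \cdots\ i_{|S_j|})$, the restriction $M|_{V_j}$ has the shape displayed in equation \eqref{Monomial_matrix}, with the nonzero entries being precisely the diagonal entries $d_{i_1 i_1},\ldots,d_{i_{|S_j|} i_{|S_j|}}$.

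Once the block-diagonal form is established, the conclusion is immediate from the classical fact that the characteristic polynomial of a block-diagonal matrix is the product of the characteristic polynomials of its diagonal blocks:
\[
\chi_M(x) \;=\; \chi_{Q^{-1}MQ}(x) \;=\; \prod_{j=1}^{k} \chi_{A_j}(x).
\]

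The only subtle point, which I expect to be the main obstacle, is the bookkeeping required in the second step to verify that the permuted block truly has the canonical shape \eqref{Monomial_matrix} rather than some other monomial form on $V_j$. This amounts to checking that if one lists the elements of $S_j$ in the order prescribed by the cycle $\sigma_j$, the action of $M$ on the corresponding basis is precisely the ``shift and scale'' encoded by \eqref{Monomial_matrix}; once the indexing convention is fixed, this reduces to reading off the action $M e_{i_\ell} = d_{i_{\ell+1} i_{\ell+1}} e_{i_{\ell+1}}$ cycle-by-cycle, and everything else is routine.
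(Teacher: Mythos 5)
Your proposal is correct; in fact the paper states this proposition without any proof at all, so your argument supplies the missing justification rather than duplicating one. The route you take --- conjugating $M=DP_\sigma$ by the permutation $Q$ that groups and orders the basis vectors along the disjoint cycles of $\sigma$, observing that $Me_{i_\ell}=d_{i_{\ell+1}i_{\ell+1}}e_{i_{\ell+1}}$ so each coordinate subspace $V_j$ is preserved and the restricted block has exactly the shape \eqref{Monomial_matrix}, and then invoking invariance of the characteristic polynomial under similarity together with the block-diagonal product formula --- is the natural one and is complete; note only that fixed points of $\sigma$ should be counted as $1$-cycles (giving $1\times 1$ blocks), and that your reading of ``simple monomial factors'' as the $|S_j|\times |S_j|$ blocks, rather than $n\times n$ matrices padded with an identity, is the interpretation under which the statement is true, since the padded version would introduce spurious factors of $(x-1)$.
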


The factorization of the characteristic polynomial into irreducible
factors depends not only on the factors simple monomial factors but on the finite field.\\
Assume that $\sigma=\sigma_1 \sigma_2\ldots \sigma_r, $ the product of $r$ disjoint of  $n_i-$cycle  $\sigma_i $, and let $\sigma_i = (i_1,...,i_{n_{i}})$ be a cycle of length $n_{i}$ in the decomposition of $\sigma$ into $r-$ disjoint cycles; Being $a_{i_1},a_{i_2},\ldots,a_{i_{n_{i}}}$ the coefficients of the monomial matrix in columns $i_1,...,i_{n_{i}}$.
In order to find new subspaces invariant under monomial matrices, we present  the definition  of a minimal polynomial of a permutation,
\begin{definition}[minimal polynomial of $\sigma$]
The minimal polynomial $m_{\sigma}(x)$ of a given permutation $\sigma $ is defined to be the minimal polynomial of its associated matrix $P_{\sigma}.$
\end{definition}

\begin{proposition}\label{PP.1}
Let $\sigma$ be a permutation of length $n,$ with minimal polynomial $ m_{\sigma}(x)$.
\begin{enumerate}
	\item  $  m_{\sigma}(x)=x^n-1 $ if $\sigma$ is a $n-$cycle.
	\item  $  m_{\sigma}(x)=x^{\lcm(n_1,n_2,\ldots,n_r)}-1 ,$ if $\sigma=\sigma_1\sigma_2\ldots \sigma_r$  a product of  $n_i-$cycle $\sigma_i.$
\end{enumerate}
\end{proposition}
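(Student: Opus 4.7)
The plan is to prove~(1) directly from the description of an $n$-cycle as a cyclic shift, and then obtain~(2) by block-diagonalising $P_\sigma$ along the disjoint-cycle decomposition and invoking~(1) on each block.

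For~(1), after relabelling the basis along the cycle $\sigma=(i_1,\ldots,i_n)$, the matrix $P_\sigma$ becomes the standard cyclic shift $e_{i_k}\mapsto e_{i_{k+1}}$ (indices modulo $n$), which is the companion matrix of $x^n-1$. A companion matrix is non-derogatory, so its minimal polynomial coincides with its characteristic polynomial $x^n-1$. Equivalently, one can note that $e_{i_1}, P_\sigma e_{i_1},\ldots,P_\sigma^{n-1}e_{i_1}$ are the $n$ distinct basis vectors and hence linearly independent, so no polynomial of degree strictly less than $n$ can annihilate $P_\sigma$; combined with $P_\sigma^n=I$, this forces $m_\sigma(x)=x^n-1$.

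For~(2), reorder the basis so that each disjoint cycle $\sigma_i$ acts on its own block of coordinates. Then $P_\sigma$ is block-diagonal with blocks $P_{\sigma_1},\ldots,P_{\sigma_r}$, and a standard fact says that the minimal polynomial of a block-diagonal matrix is the least common multiple (in $\mathbb{F}_q[x]$) of the minimal polynomials of the blocks. By~(1) this gives $m_\sigma(x)=\lcm\bigl(x^{n_1}-1,\ldots,x^{n_r}-1\bigr)$, and one then wants to identify this with $x^{\lcm(n_1,\ldots,n_r)}-1$.

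The main obstacle lies precisely in this last identification. In $\mathbb{F}_q[x]$, with $\gcd(q,n)=1$ guaranteeing separability of all the factors involved, the roots of $\lcm_i(x^{n_i}-1)$ in $\overline{\mathbb{F}_q}$ form the set $\bigcup_i \mu_{n_i}$ of $n_i$-th roots of unity, whereas the roots of $x^{\lcm(n_i)}-1$ form the full cyclic group $\mu_{\lcm(n_i)}$. These two sets coincide exactly when $\bigcup_i \mu_{n_i}$ is already closed under multiplication, e.g.\ when $r=1$ or when the $n_i$ are pairwise coprime, but in general the inclusion is strict (for instance $\lcm(x^2-1,x^3-1)$ has degree $4$, whereas $x^6-1$ has degree $6$). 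I would therefore either complete~(2) under an added coprimality hypothesis on the cycle lengths, or replace the stated conclusion by the intrinsic expression $m_\sigma(x)=\lcm_i(x^{n_i}-1)$; in the latter case the proof reduces to the routine block-decomposition step combined with~(1), and the identity $P_\sigma^{\lcm(n_i)}=I$ still yields the weaker divisibility $m_\sigma(x)\mid x^{\lcm(n_i)}-1$, which is the content the statement can honestly support.
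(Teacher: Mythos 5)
Your part (1) is essentially the paper's own argument: conjugate $P_\sigma$ to the companion matrix of $x^n-1$ (the paper writes $\sigma=\sigma'^{-1}\rho\,\sigma'$ with $\rho$ the standard cycle) and use that a companion matrix is non-derogatory; your cyclic-vector variant with $e_{i_1},P_\sigma e_{i_1},\ldots,P_\sigma^{n-1}e_{i_1}$ is an equally valid way to see it. For part (2) you also follow the paper's route --- block-diagonalise along the disjoint cycles and take the least common multiple of the blocks' minimal polynomials --- but you are right to refuse the final identification. The paper's proof simply asserts $m_\sigma(x)=\lcm_{1\le i\le r}\{x^{n_i}-1\}=x^{\lcm(n_1,\ldots,n_r)}-1$, and that second equality is false in general, so the defect here is in the printed proposition, not in your argument. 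Concretely, for $\sigma=(1\,2)(3\,4\,5)$ one has $m_\sigma(x)=\lcm(x^2-1,x^3-1)=(x-1)(x+1)(x^2+x+1)=x^4+x^3-x-1$, and indeed $P_\sigma^4+P_\sigma^3-P_\sigma-I=0$, whereas $x^{\lcm(2,3)}-1=x^6-1$ has degree $6$. Your proposed repair is the correct statement: $m_\sigma(x)=\lcm_i\bigl(x^{n_i}-1\bigr)$, which divides $x^{\lcm(n_1,\ldots,n_r)}-1$ because $P_\sigma^{\lcm(n_1,\ldots,n_r)}=I$, but need not equal it; the exponent $\lcm(n_1,\ldots,n_r)$ is the order of $P_\sigma$, not the degree of its minimal polynomial.

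One correction to your side remark about when equality does hold: pairwise coprimality of the $n_i$ does not suffice --- your own counterexample has $n_1=2$ and $n_2=3$ coprime. In the separable situation the roots of $\lcm_i(x^{n_i}-1)$ form the union of subgroups $\bigcup_i\mu_{n_i}$, and since a finite cyclic group is never the union of proper subgroups, this union equals $\mu_{\lcm(n_1,\ldots,n_r)}$ exactly when one of the $\mu_{n_i}$ contains all the others, i.e.\ when some cycle length $n_i$ is a multiple of every other one (so that $\lcm(n_1,\ldots,n_r)=\max_i n_i$); in that case the sufficiency is immediate from $x^{n_j}-1\mid x^{n_i}-1$ for $n_j\mid n_i$, and this is the only hypothesis under which the formula printed in the proposition is correct. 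With that adjustment (or with the restated conclusion $m_\sigma=\lcm_i(x^{n_i}-1)$), your write-up is a complete and correct proof, and it pinpoints exactly the step at which the paper's argument fails.
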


\begin{proof}
\begin{enumerate}
	\item Assume that $\sigma $ of the  trivial form (i.e) : \\ $\sigma(1,2,\ldots,n)= (n,1,\ldots,n-1), $ then  $P_{\sigma} $  is the companion matrix of $x^n-1,$ and so $ m_{\sigma}=x^n-1.$. On the other hand if
	$\sigma$ hasn't the trivial form then there is a permutations $ \sigma^{'},\ \rho $ such that $ \sigma=  \sigma^{'-1}\circ \rho \circ  \sigma $,  where $\rho$ is the trivial $n-$cycle. Hence
	$ P_{\sigma}= P_{\sigma^{'-1}} P_{\rho}  P_{\sigma^{'}}= P_{\sigma^{'}}^{-1} P_{\rho}  P_{\sigma^{'}} ,$ and so
	$ m_{\sigma}(x)=m_{\rho}(x)=x^n-1$.
	
	\item   We can write   $\sigma=\sigma_1\sigma_2\ldots \sigma_r$ as a product of $r$ $n_i-$cycle $\sigma_i.$ Then   $m_{\sigma_i}(x)=x^{n_i}-1, $ and so
	$ m_{\sigma}(x)= \lcm_{1\leq i\leq r}\left\lbrace x^{n_i}-1\right\rbrace = x^{\lcm(n_1,n_2,\ldots,n_r)}-1$.
\end{enumerate}
\end{proof}

Below we give a decomposition of monomial codes under the case where $\sigma$ the product of $r$ disjoint of  $n_i$-cycle  $\sigma_i $:
\begin{theorem}\label{TT.2}	
Let   $ C\subseteq \mathbb{F}_{_q}^{^n} $  be a monomial code and   $\sigma=\sigma_1 \sigma_2\ldots \sigma_r, $ the product of $r$ disjoint of  $n_i$-cycle  $\sigma_i $, then $C$ can be decomposed as
$$ C= C_1\oplus C_2\oplus \ldots \oplus C_r$$
where each $C_i$ is monomial.
\end{theorem}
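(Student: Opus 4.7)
The plan is to leverage the cycle decomposition of $\sigma$ to put $A$ into block-diagonal form and then invoke the primary-decomposition argument (via B\'ezout) used in the proof of Theorem~\ref{theo 1}, assertion~2.

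First I would invoke the earlier proposition asserting that every monomial matrix factors as a product of disjoint simple monomial matrices. After suitably reindexing the coordinates, $A$ takes the block-diagonal form $A = A_1 \oplus \cdots \oplus A_r$, where each $A_i$ is a simple monomial matrix of size $n_i$ acting on the coordinate subspace $V_i \subset \mathbb{F}_q^n$ spanned by the basis vectors indexed by $\mathrm{supp}(\sigma_i)$. This supplies the $A$-invariant direct-sum decomposition $\mathbb{F}_q^n = V_1 \oplus \cdots \oplus V_r$ with $A|_{V_i} = A_i$.

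By the characteristic-polynomial computation of Section~\ref{sec2}, the restriction $A_i$ has characteristic polynomial $\pm g_i(x)$ with $g_i(x) = x^{n_i} - a^{(i)}$ and $a^{(i)} = \prod_{j \in \mathrm{supp}(\sigma_i)} a_j$, so that $f_A(x) = \pm\prod_{i=1}^r g_i(x)$. Setting $\tilde g_i = f_A/g_i$, pairwise coprimality of the $g_i$ yields $\gcd(\tilde g_1,\ldots,\tilde g_r) = 1$, and B\'ezout supplies polynomials $\alpha_i$ with $\sum_i \alpha_i\tilde g_i = 1$. For every $c \in C$, the identity $c = \sum_i \alpha_i(A)\tilde g_i(A)\,c$ exhibits each summand $c_i := \alpha_i(A)\tilde g_i(A)\,c$ as belonging to $\ker g_i(A) = V_i$ (by Cayley--Hamilton together with the coprimality), and since $C$ is $A$-invariant it also lies in $C$, hence in $C_i := C \cap V_i$. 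The sum $C = C_1 \oplus \cdots \oplus C_r$ is direct because the $V_i$ are, and each $C_i$, being $A_i$-invariant in $V_i$, is a monomial code of length $n_i$.

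The main obstacle I expect is establishing the pairwise coprimality of the polynomials $g_i = x^{n_i} - a^{(i)}$, which can fail when two of the pairs $(n_i, a^{(i)})$ are linked by a compatible power identity making $g_i$ and $g_j$ share a root in $\overline{\mathbb{F}_q}$. Under coprimality the B\'ezout argument above is conclusive. In the degenerate case I would instead decompose $\mathbb{F}_q^n$ along the distinct irreducible factors of $f_A$ via Theorem~\ref{theo 1}, and use the block-diagonal structure of $A$ together with Proposition~\ref{prop 1} to regroup the minimal $A$-invariant subspaces of $C$ into monomial codes $C_i \subset V_i$ of the required form.
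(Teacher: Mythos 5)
Your main line of argument is, in substance, the paper's own proof carried out more carefully. The paper decomposes $\mathbb{F}_q^n$ into kernels attached to the disjoint cycles (via their minimal polynomials) and intersects with $C$; you decompose along the coordinate blocks $V_i=\ker g_i(A)$ with $g_i(x)=x^{n_i}-a^{(i)}$ and, crucially, you justify why $C$ actually distributes over this direct sum, using the B\'ezout identity $\sum_i\alpha_i\tilde g_i=1$ and the $A$-invariance of $C$ -- a step the printed proof simply asserts, even though intersection does not distribute over direct sums in general. Working with the block polynomials $x^{n_i}-a^{(i)}$ rather than the permutation polynomials $x^{n_i}-1$ is also the right choice, since $C$ is invariant under the monomial matrix $A$ and not under $P_\sigma$. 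Under the pairwise-coprimality hypothesis your argument is complete and correct, and somewhat more rigorous than the paper's.

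The obstacle you flag is, however, a genuine gap, and your fallback does not close it. Take $q=5$, $n=4$, $\sigma$ a product of two $2$-cycles, and $A_1=A_2$ the $2\times 2$ simple monomial matrix with both coefficients equal to $1$, so $g_1=g_2=x^2-1$. The ``diagonal'' code $C=\{(v,v)\mid v\in\mathbb{F}_5^2\}$ is $A$-invariant, yet $C\cap V_1=C\cap V_2=\{0\}$, so no decomposition $C=C_1\oplus C_2$ with $C_i\subseteq V_i$ exists at all. Regrouping minimal $A$-invariant subspaces cannot help: the minimal invariant subspaces of $C$ here are spanned by $(1,1,1,1)$ and $(1,4,1,4)$, neither of which lies in a $V_i$; moreover Theorem~\ref{theo 1} is not available in this situation because $f_A=(x^2-1)^2$ has repeated irreducible factors, violating the standing assumption under which the $W_i$ were constructed. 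So in the non-coprime case the conclusion, read as $C_i=C\cap V_i$, fails, and your second route cannot repair it. To be fair, the paper's proof tacitly makes the same assumption: its kernels $\ker(m_i(\sigma))$, with $m_i(x)=x^{n_i}-1$, always overlap (each contains the vectors constant on every cycle's support), so the asserted decomposition of $\mathbb{F}_q^n$ there is not direct either. The honest formulation is to add the hypothesis $\gcd(g_i,g_j)=1$ for $i\neq j$; with that hypothesis stated, your B\'ezout argument proves the theorem, and without it the statement itself needs to be weakened.
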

\begin{proof} Let for each $i=1,\ldots ,r$ be the minimal polynomial $\sigma_i.$ So,  as $  \sigma_i, \ 1\leq i\leq r $ are disjoint then on can write decompose  $\mathbb{F}_{_q}^{^n}$ as
$$ \mathbb{F}_{_q}^{^n}=  \ker(m_1(\sigma))\oplus \ker(m_2(\sigma)) \oplus \ldots \oplus \ker(m_r(\sigma)).$$
It follows that $$ C= C\cap \mathbb{F}_{_q}^{^n}= C\cap \ker(m_1(\sigma))\oplus  C\cap\ker(m_2(\sigma)) \oplus \ldots \oplus C\cap \ker(m_r(\sigma)). $$
Clearly  $C_i= C\cap \ker(m_r(\sigma))$ are invariant by $ \sigma_{/_{\ker(m_i(\sigma))}}$ hence it is invariant by $\sigma_i,$ and so $C_i$ is invariant by $\sigma_i$.
\end{proof}

The property that an invariant subspace under a simple monomial matrix is a characteristic (in fact, hyperinvariant subspace) that cannot be generalized to a generalized monomial

\begin{example}
$\mathbb{F}[5]$, $n=6$, $\mbox{\rm gcd}\,(5,6)=1$

$A= \begin{pmatrix}
0  &   0  &   1  &   0  &   0 &    0\\
2  &   0  &   0  &   0  &   0   &  0\\
0  &   3  &   0  &   0  &   0   &  0\\
0  &   0  &   0  &   0  &   0   &  1\\
0  &   0  &   0  &   2  &   0   &  0\\
0  &   0  &   0  &   0  &   3   &  0\end{pmatrix}$

$V=[(1,2,1,0,0,0)]$ is an $A$-invariant subspace

Let

$ C=\begin{pmatrix}
0  &  0  &   0   &  1  &   3   &  1\\
0  &  0 &   0  &   2  &   1   &  2\\
0   &  0 &   0   &  1   &  3    & 1\\
1   &  3  &   1  &   0  &   0  &  0\\
2   &  1  &   2  &   0  &   0   &  0\\
1   & 3   &  1   &  0  &  0  & 0
\end{pmatrix}\in C(A)$

$  \begin{pmatrix}
0  &  0  &   0   &  1  &   0  &  0\\
0  &  0 &   0  &   0  &   1   &  0\\
0   &  0 &   0   &  0   &  0   & 1\\
1   &  0  &   0  &   0  &   0  &  0\\
0  &  1  &   0  &   0  &   0   &  0\\
0  & 0 &  1   &  0  &  0  & 0
\end{pmatrix}\begin{pmatrix}1\\2\\1\\0\\0\\0 \end{pmatrix}= \begin{pmatrix}
0\\0\\0\\ 1\\ 2\\1
\end{pmatrix}\notin V$.

\noindent	So, the subspace is not characteristic.
\end{example}

In a more general way,  computing the centralizer of a generalized monomial code:

$\begin{pmatrix}
A_{1}&&\\ &\ddots &\\ &&A_{r}
\end{pmatrix}$

with $A_{i}$ simple monomial matrices

$$\begin{pmatrix}
A_{1}&&\\ &\ddots &\\ &&A_{r}
\end{pmatrix}\begin{pmatrix}
X_{11}&\ldots &X_{1r}\\ \vdots &&\vdots\\ X_{r1}&\ldots &X_{rr}
\end{pmatrix}- \begin{pmatrix}
X_{11}&\ldots &X_{1r}\\ \vdots &&\vdots\\ X_{r1}&\ldots &X_{rr}
\end{pmatrix}\begin{pmatrix}
A_{1}&&\\ &\ddots &\\ &&A_{r}
\end{pmatrix}=0.$$

Equivalently,

$$\begin{array}{rl}
A_{i}X_{ii}-X_{ii}A_{i}&=0, i=1,\ldots , r\\
A_{i}X_{ij}-X_{ij}A_{j}&=0, i, j=1,\ldots , r, i\neq j
\end{array}$$

$A_{i}X_{ii}-X_{ii}A_{i}$ corresponds to the centralizer of simple monomial matrices

suppose now $A_{1},\ldots , A_{r}\in M_{n}(\mathbb{F})$, and  $gcd\, (f_{A_{i}}(x),f_{A_{j}}(x)=1$ for all $i,j=1,\ldots , r, i\neq j$

\begin{proposition}
In these conditions the equation $A_{i}X_{ij}-X_{ij}A_{j}=0, i, j=1,\ldots , r, i\neq j$ has a unique solution $X_{ij}=0$
\end{proposition}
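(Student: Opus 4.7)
The plan is to use the standard Sylvester-equation / coprime-characteristic-polynomial technique. Writing the equation as $A_i X_{ij} = X_{ij} A_j$, I would first establish by straightforward induction that $A_i^k X_{ij} = X_{ij} A_j^k$ for every $k \geq 0$, and therefore $p(A_i)\, X_{ij} = X_{ij}\, p(A_j)$ for every polynomial $p \in \mathbb{F}[x]$.

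Next I would apply this identity to $p = f_{A_i}$, the characteristic polynomial of $A_i$. By the Cayley--Hamilton theorem, $f_{A_i}(A_i) = 0$, so the identity gives
\begin{equation*}
0 \;=\; f_{A_i}(A_i)\, X_{ij} \;=\; X_{ij}\, f_{A_i}(A_j).
\end{equation*}
The task then reduces to showing that $f_{A_i}(A_j)$ is invertible. Here I would use the coprimality hypothesis $\gcd(f_{A_i}(x), f_{A_j}(x)) = 1$: by Bézout there exist $u(x), v(x) \in \mathbb{F}[x]$ with $u(x) f_{A_i}(x) + v(x) f_{A_j}(x) = 1$. Substituting $A_j$ and invoking Cayley--Hamilton once more ($f_{A_j}(A_j) = 0$) yields $u(A_j)\, f_{A_i}(A_j) = I$, so $f_{A_i}(A_j)$ is indeed invertible.

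Combining the two previous steps, $X_{ij}\, f_{A_i}(A_j) = 0$ multiplied on the right by $f_{A_i}(A_j)^{-1}$ gives $X_{ij} = 0$, which is the desired uniqueness. The argument is symmetric in $i$ and $j$ (one could alternatively multiply on the left by $f_{A_j}(A_i)^{-1}$), so no extra hypothesis is needed. The only potentially delicate point is the polynomial-identity step $p(A_i) X_{ij} = X_{ij} p(A_j)$, but it is immediate from the monomial case $x^k$ by linearity, so I do not expect any serious obstacle.
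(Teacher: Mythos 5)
Your argument is correct, but it follows a more general route than the paper. Both proofs start the same way, iterating $A_iX_{ij}=X_{ij}A_j$ to get $A_i^kX_{ij}=X_{ij}A_j^k$; the paper then simply takes $k=n$ and exploits the special shape of the characteristic polynomials of simple monomial matrices, $f_{A_i}(x)=x^n-\alpha_i$, so that Cayley--Hamilton gives $A_i^n=\alpha_iI$ and $A_j^n=\alpha_jI$, hence $(\alpha_i-\alpha_j)X_{ij}=0$ and $X_{ij}=0$ because $\alpha_i\neq\alpha_j$. You instead run the standard Sylvester-equation argument: from $0=f_{A_i}(A_i)X_{ij}=X_{ij}f_{A_i}(A_j)$ you use B\'ezout, $u(x)f_{A_i}(x)+v(x)f_{A_j}(x)=1$, evaluated at $A_j$ to see that $f_{A_i}(A_j)$ is invertible. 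This is sound, and in fact in the present setting it collapses to the paper's computation, since $f_{A_i}(A_j)=A_j^n-\alpha_iI=(\alpha_j-\alpha_i)I$. What your version buys is generality and robustness: it proves the statement for \emph{any} square matrices $A_i$, $A_j$ with coprime characteristic polynomials, with no use of the monomial structure, whereas the paper's proof is a shorter ad hoc calculation tied to the form $x^n-\alpha$. Either is acceptable; just note that the coprimality hypothesis here is equivalent to $\alpha_i\neq\alpha_j$, which is the fact the paper uses directly.
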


\begin{proof}
$f_{A_{i}}(x)=x^n-\alpha_{i}$ with $\alpha_{i}\neq \alpha_{j}$ for all $i\neq j$.

From $A_{i}X=XA_{j}$ premultiplying  by $A_{i}$:

$A_{i}^2X=A_{i}XA_{j}$ changing $A_{i}X$ by $XA_{j}$ we have $A_{i}^2X= XA_{j}^2$.

Repeating the process, we have

$$A^n_{i}X=XA^n_{j}$$

Cayley-Hamilton theorem  ensures $A_{i}^n=\alpha_{i}I$ and $A_{j}^n=\alpha_{j}I$

Then, $\alpha_{i}IX=X\alpha_{j}I$
but,
equivalently: $(\alpha_{i}-\alpha_{j})X=0$  $(\alpha_{i}-\alpha_{j})\neq 0$, So $X=0$
\end{proof}

\section{Conclusion}
The class of monomial codes was presented, and some new results of such codes were found. The concept of the characteristic subspace is discussed and linked to monomial codes, which are considered invariant subspaces under special homomorphism. Finally, the new concept of generalized monomial code is presented with some new properties.




\end{document}